\newcommand{\Oh}{\mathcal{O}}
\newcommand{\oh}{o}
\newcommand{\Om}{\Omega}
\newcommand{\Th}{\Theta}
\newcommand{\bool}{{\set{0, 1}}}
\newcommand{\nat}{{\mathbb{N}_0}}
\newcommand{\real}{\mathbb{R}}
\DeclarePairedDelimiter{\paren}{(}{)}
\DeclarePairedDelimiter{\set}{\lbrace}{\rbrace}
\DeclarePairedDelimiter{\tuple}{\langle}{\rangle}
\DeclarePairedDelimiter{\abs}{|}{|}
\DeclarePairedDelimiter{\ceil}{\lceil}{\rceil}
\DeclarePairedDelimiter{\floor}{\lfloor}{\rfloor}
\newcommand{\setCond}[2]{\set{#1~\vert~#2}}
\newcommand\setCond*[2]{\set*{#1~\middle\vert~#2}}
\DeclarePairedDelimiter{\intCO}{[}{)}
\newcommand{\access}{\mathbb{A}}
\newcommand{\prog}{\fun{R}}
\newcommand{\A}[1]{{\mathcal{A}_{#1}}}
\newcommand{\subProg}{\mathcal{S}}
\newcommand{\accessSub}{{\access_\subProg}}
\newcommand{\ASub}[1]{{\mathcal{A}^\subProg_{#1}}}
\newcommand{\fun}[1]{{\operatorname{\mathtt{#1}}}}
\newcommand{\funDef}[2]{\colon#1\rightarrow#2}
\newcommand{\hfrac}[2]{{#1 \mathbin{/} #2}}
\newcommand{\concat}{\mathbin{\|}}
\newcommand{\logeq}{\mathrel{\vcentcolon\Leftrightarrow}}
\newcommand{\ord}[1]{\mathrel{\leq_{#1}}}
\newcommand{\lea}{\ord{a}}
\newcommand{\leb}{\ord{b}}
\newcommand{\lep}{\ord{p}}
\newcommand{\markName}[1]{\relax\ifmmode\text{\underline{#1}}%
  \else\underline{#1}\fi}
\newcommand{\marker}[2][]{\fun{l}(\ifblank{#1}{%
  #2}{%
  #2, \markName{#1}})}
\newcommand{\dual}[1]{\overline{#1}}
\newcommand{\intersection}[2]{{p_{{#1}\times{#2}}}}
\newcommand{\Pr}[1]{{\operatorname{Pr}[#1]}}
\algrenewcommand{\textproc}[1]{$#1$}
\newtheorem{definition}{Definition}
\newtheorem{remark}{Remark}
\newcolumntype{L}{>{$}l<{$}}
\newcolumntype{C}{>{$}c<{$}}
\newcolumntype{R}{>{$}r<{$}}
\newcommand{\stremph}[1]{\textbf{\boldmath#1}}
\newcommand{\code}[1]{\texttt{#1}}
\newcommand{\bareRef}[2]{\hyperref[#2]{#1}}
\newcommand{\Cpp}{\texorpdfstring{%
  C\raisebox{0.5ex}{\tiny\textbf{++}}}{%
  C++}}
\title{Oblivious Median Slope Selection}
\author{%
  Thore Thie{\ss}en\thanks{Westf\"alische Wilhelms-Universit\"at M\"unster,
    Dept.\ of Computer Science {\tt \{t.thiessen,jan.vahrenhold\}@uni-muenster.de}}
  \and
  Jan Vahrenhold\footnotemark[\value{footnote}]}
\begin{document}
  \thispagestyle{empty}
  \maketitle

  \begin{abstract}
    We study the median slope selection problem in the oblivious RAM
    model. In this model memory accesses have to be independent of
    the data processed, i.\,e., an adversary cannot use observed
    access patterns to derive additional information about the input.
    We show how to modify the randomized algorithm of
    \textcite{matousek_randomized_1991} to obtain an oblivious version
    with $\Oh(n \log^2 n)$ expected time for $n$~points in
    $\real^2$. This complexity matches a theoretical upper bound that
    can be obtained through general oblivious transformation. In addition, results
    from a proof-of-concept implementation show that our
    algorithm is also practically efficient.
  \end{abstract}

\section{Introduction}%
\label{sec:Introduction}

Data collected for statistical analysis is often sensitive in nature.
Given the increasing reliance on cloud-based solutions for data processing, there is a demand for data-processing techniques that provide privacy guarantees.
One such guarantee is \emph{obliviousness}, i.\,e., an algorithm's property to have externally observable runtime behavior that is independent of the data being processed.
Depending on the runtime behavior observed, oblivious algorithms can be used to perform privacy-preserving computations on externally stored data or mitigate side channel attacks on shared resources~\parencite{stefanov_oblivistore_2013,liu_memory_2013}.

In the oblivious RAM model of computation~\parencite{goldreich_towards_1987,goldreich_software_1996} algorithms need to be oblivious with respect to the memory access patterns; we refer to \emph{memory-access obliviousness} as \emph{obliviousness}.
In general this leads to an $\Om(\log m)$ overhead compared to RAM algorithms when operating on $m$ memory cells~\parencite{goldreich_software_1996,larsen_yes_2018,hofheinz_stronger_2019}.
A transformation approach matches this lower bound asymptotically~\parencite{asharov_optorama_2020}, but is known to result in prohibitively large constant runtime overhead.

The median slope, know as the \emph{Theil--Sen estimator}, is a linear point estimator that is robust against outliers~\parencite{sen_estimates_1968}.
The randomized algorithm of \textcite{matousek_randomized_1991} computes the median slope of $n$~points in $\real^2$ with expected runtime $\Oh(n \log n)$ and is fast in practice.
We derive an oblivious version of \citeauthor{matousek_randomized_1991}'s algorithm that is slower by a logarithmic factor --- matching the complexity obtainable through general transformation --- but still fast in practice.

\subsection{Median slope selection problem}%
\label{sec:ProblemDefinition}

Median slope selection is a special case of the general \emph{slope selection problem}:
Given a set of points $P$ in the plane, the slope selection problem for an integer $k$ is to select a line with $k$-th smallest slope
among all lines through points in $P$~\parencite{cole_optimal-time_1989}.
Formally, given a set of $n$ points $P \subset \real^2$ let $L \coloneqq \setCond{\set{p, q} \subseteq P}{p_x \not= q_x}$
be the set of all pairs of points from $P$ with distinct $x$-coordinates.
We use $\ell_{p q} \in L$ to denote the line through points $\set{p, q} \in L$.
No line in $L$ is vertical by definition,\footnote{%
  \textcite{cole_optimal-time_1989} allow the selection of vertical lines and thus points with identical $x$-coordinates, but we exclude these as the Theil--Sen estimator is defined for non-vertical lines only.} so the slope $m(\ell_{p q})$ is well-defined for all $\ell_{p q} \in L$.
Let $k$ be an integer with $k \in [\abs{L}] \coloneqq \set{0, \ldots, \abs{L} - 1}$.
The slope selection problem for $k$ then is to select points
$\set{p, q} \in L$ such that $\ell_{p q}$ has a $k$-th smallest slope
in $L$.

Unless noted otherwise and in line with \textcite{matousek_randomized_1991} our exposition assumes that the points $P$ are in general position:
All $x$-coordinates of points $\set{p, q} \subseteq P$ are distinct and all lines through different pairs of points have different slopes.
For simplicity we also assume that $\abs{L}$ is odd, so that the median slope can be determined by solving the slope selection problem for $k = \frac{\abs{L} - 1}{2}$.
In \cref{sec:NonGeneralPositions}, we discuss how to lift these restrictions.

\citeauthor{matousek_randomized_1991}'s algorithm approaches the slope selection problem by considering the dual \emph{intersection selection problem}~\parencite{matousek_randomized_1991}:
Each point $p = \tuple{p_x, p_y} \in P$ can be mapped to dual non-vertical line $\dual{p}\colon x \mapsto (p_x x - p_y)$ and vice versa.
Since we have $\dual{p}(x) = \dual{q}(x) = y \iff \dual{\tuple{x, y}} = \ell_{p q}$,
a point in the set $\dual{L}$ of (dual) intersection points
with $k$-th smallest $x$-coordinate is dual to a line in $L$ with $k$-th smallest slope~\parencite{matousek_randomized_1991,dillencourt_randomized_1992}.

We thus restrict ourselves to finding an intersection of dual lines $\dual{P}$ with $k$-th smallest $x$-coordinate.
By the above assumption regarding the general position of the points in $P$, the lines in $\dual{P}$ have distinct slopes and all intersection points have distinct $x$-coordinates.

\subsection{Oblivious RAM model}%
\label{sec:ObliviousRamModel}

We work in the \emph{oblivious RAM} (\emph{ORAM}) model~\parencite{goldreich_towards_1987,goldreich_software_1996}.
This model is concerned with what can be derived by an adversary observing the \emph{memory access patterns} during the execution of a program.
The general requirement is that memory accesses are (data-)\emph{oblivious}, i.\,e., that the adversary can learn nothing about the input (or output) from the memory access pattern.

In line with standard assumptions, we assume a probabilistic word RAM with word length $w$, a constant number of registers in the processing unit and access to $m \leq 2^w$ memory cells with $w$ bits each in the memory unit~\parencite{hofheinz_stronger_2019}.
The constant number of registers in the processing unit are called \emph{private memory} and do not have to be accessed in an oblivious manner.

Whether a given probabilistic RAM program $\prog$ operating on inputs $X$ is oblivious depends on the way memory is accessed.
Let
$  \access \coloneqq \set{\fun{read}, \fun{write}} \times [m]$
be the set of memory \emph{probes} observable by the adversary.
Each probe is identified by the memory operation and the access location $i \in [m]$.
The random variable
$  \A{\prog(x)}\funDef{\Omega}{\access^*}$
denotes the \emph{probe sequence} performed by $\prog$ for an input $x \in X$ where $\Omega \coloneqq \bool^{l \cdot w}$ is the set of possible random tape contents.
The program $\prog$ is \emph{secure} if no adversary, given inputs $x, x' \in X$ of equal length and a probe sequence $A \in \access^*$, can reliably decide whether $A$ was induced by $x$ or $x'$.
For a program with an output determined by the input this implies that no adversary can decide between given outputs~\parencite{farach-colton_bucket_2020}.

We operationalize obliviousness by restricting the definition of \textcite{chan_cache-oblivious_2018} to perfect security, determined programs, and perfect correctness.
The definition also generalizes the allowed dependence of the probe sequence on the length of the input to a general \emph{leakage};
the leakage determines what information the adversary may be able to derive from the memory access patterns.

\begin{definition}[Oblivious simulation]%
  \label{def:ObliviousnessDefinition}
  Let $f\funDef{X}{Y}$ be a computable function and let $\prog$ be a probabilistic RAM program.
  $\prog$ obliviously simulates $f$ with regard to leakage $\fun{leak}\funDef{X}{\bool^*}$ if $\prog$ is \stremph{correct}, i.\,e., for all inputs $x \in X$ the equality
$    \Pr{\prog(x) = f(x)} = 1$ holds,
  and if $\prog$ is \stremph{secure}, i.\,e., for all inputs $x, x' \in X$ with $\fun{leak}(x) = \fun{leak}(x')$ the equality
$    \sum_{A \in \access^*} \abs*{\Pr{\A{\prog(x)} = A} - \Pr{\A{\prog(x')} = A}} = 0$ holds.
\end{definition}

The composition of oblivious programs is also oblivious if the sub-procedures invoke each other in an oblivious manner\ifthenelse{\boolean{fullVersion}}{;
  see \bareRef{Appendix A}{sec:ObliviousComposability} for a more detailed discussion and proof of composability.%
}{.}
Here relaxing the leakage allows us to place fewer restrictions on sub-procedures while maintaining obliviousness of the complete program.

For the specific problem in this paper,
the algorithm is only allowed to leak the number of given lines, or, for subroutines, the length of each given input array.
We will prove the obliviousness of our algorithm by composability, so we will consider the obliviousness of sub-procedures individually.
In line with \cref{def:ObliviousnessDefinition} we will show the obliviousness of each procedure in relation to the input.
Since we only consider sub-procedures with determined result this implies the obliviousness in relation to the output.

\subsection{Related work}

There exists a breadth of research on the slope selection problem.
\textcite{cole_optimal-time_1989} prove a lower bound of $\Om(n \log n)$ for the general slope selection problem in the algebraic decision tree model that also holds in our setting\ifthenelse{\boolean{fullVersion}}{
  (see \bareRef{Appendix B}{sec:MedianSlopeSelectionLowerBound})%
}{}.
Both deterministic algorithms~\parencite{cole_optimal-time_1989,katz_optimal_1993,bronnimann_optimal_1998} and randomized~\parencite{matousek_randomized_1991,dillencourt_randomized_1992} algorithms have been proposed that achieve an $\Oh(n \log n)$ (expected) runtime.
The problem has also been considered in other models, see, e.\,g., in-place algorithms~\parencite{calamoneri_-place_2006}.

\textcite{asharov_optorama_2020} recently proposed an asymptotically optimal ORAM construction that matches the overhead factor of $\Om(\log m)$ per memory access.
This construction provides a general way to transform RAM programs into oblivious variants with no more than logarithmic overhead per memory operation.
Due to large constants this optimal oblivious transformation is not viable in practice, though practically efficient (yet asymptotically suboptimal) constructions are available, see, e.\,g., \emph{Path ORAM}~\parencite{stefanov_path_2013}.
Our algorithm matches the asymptotic runtime of an optimal transformation while maintaining practical efficiency and perfect security.

A different approach is the design of problem-specific algorithms without providing general program transformations.
Oblivious algorithms for fundamental problems have been considered, e.\,g., for sorting~\parencite{goodrich_randomized_2010,farach-colton_bucket_2020}, sampling~\parencite{sasy_oblivious_2019,shi_path_2020}, database joins~\parencite{agrawal_sovereign_2006,li_privacy_2008,krastnikov_efficient_2020}, and some geometric problems~\parencite{eppstein_privacy-preserving_2010}.
To the best of our knowledge neither the slope selection problem nor the related inversion counting problem have been considered in the oblivious setting before.

\section{A simple algorithm}%
\label{sec:Algorithm}

As mentioned above, our approach is to modify the randomized algorithm proposed by \textcite{matousek_randomized_1991}.
For this, we replace all non-trivial building blocks of the original algorithm --- most notably intersection counting and intersection sampling --- by oblivious counterparts.

\subsection{The original algorithm}%
\label{sec:OriginalAlgorithm}

\begin{algorithm*}
  \caption{Randomized intersection selection algorithm~\parencite{matousek_randomized_1991}.}%
  \label{alg:MatIntSelection}
  
\begin{algorithmic}[1]
  \Function{\fun{IntSelection}}{$\dual{P}, k$}%
    \Comment{$k \in [\fun{IntCount}(\dual{P}, -\infty, +\infty)]$}
  \State $n \gets \abs{\dual{P}}; N \gets \fun{IntCount}(\dual{P}, a, b)$%
  \Comment{Number of input lines and of remaining intersections}
  \State $a \gets -\infty; b \gets +\infty$
  \Do\label{alg:MatIntSelection:l:LoopStart}
    \State $j \gets \hfrac{n \cdot (k - \fun{IntCount}(\dual{P}, -\infty, a) + 1)}{N} - 1$%
      \Comment{Adjust $k$ relative to current boundaries}
    \State $j_a \gets \max\set{0, \floor{j - 3 \sqrt{n}}}; j_b \gets \min\set{n - 1, \ceil{j + 3 \sqrt{n}}}$
    \State $R \gets \fun{IntSample}(\dual{P}, a, b, n)$%
      \label{alg:MatIntSelection:l:Sampling} \Comment{Sample
        intersection points}
    \State $a' \gets \fun{Select}_x(R, j_a); b' \gets \fun{Select}_x(R, j_b)$%
      \label{alg:MatIntSelection:l:Selection}  \Comment{Select candidate
        boundaries for next iteration}
    \State $m_{a'} \gets \fun{IntCount}(\dual{P}, -\infty, a'); m_{b'} \gets \fun{IntCount}(\dual{P}, -\infty, b')$%
      \Comment{Count intersections left of $a'$ and left of $b'$}
    \If{$m_{a'} \leq k < m_{b'} \land m_{b'} - m_{a'} \leq \hfrac{11 N}{\sqrt{n}}$}%
      \label{alg:MatIntSelection:l:Check}
      \State $a, b \gets a', b'$ \Comment{Update boundaries}
      \State $N \gets m_{b'} - m_{a'}$ \Comment{Update remaining intersections}
    \EndIf
  \DoWhile{$N > n$}%
    \label{alg:MatIntSelection:l:LoopEnd}
  \State $R \gets \fun{IntEnumeration}(\dual{P}, a, b)$%
    \label{alg:MatIntSelection:l:Enumeration}%
    \Comment{Enumerate all remaining intersections}
  \State\Return $\fun{Select}_x(R, k - \fun{IntCount}(\dual{P},
  -\infty, a))$
  \Comment{Select correct intersection}
  \EndFunction
\end{algorithmic}

\end{algorithm*}

\Cref{alg:MatIntSelection} shows the original algorithm as described by \textcite{matousek_randomized_1991}.
In a nutshell the algorithm works by maintaining intersections $a$ and $b$ as lower and upper bounds for the intersection $p_k$ with $k$-th smallest $x$-coordinate to be identified.\footnote{
  Generalizing the description of the algorithm~\parencite{matousek_randomized_1991} we maintain the intersections $a, b$ instead of only their $x$-coordinates.}

In the main loop a \emph{randomized interpolating search} is performed, tightening the bounds $a$ and $b$ until only $N \in \Oh(n)$ intersections remain in between.
For this, a multiset $R$ of $n$ intersections is sampled from the remaining intersections (with replacement) in each iteration.
Then new bounds $a'$ and $b'$ are selected from $R$ based on the relative position of $p_k$ among the remaining intersections.
The check in \cref{alg:MatIntSelection:l:Check} ensures that $p_k$ lies within these new bounds and that the number of intersections has been sufficiently reduced.
\citeauthor{matousek_randomized_1991} proves that this check has a high probability to pass, implying that the number of remaining intersections is reduced by a factor of $\Om(\sqrt{n})$ in an expected constant number of iterations.
Thus only an expected constant number of loop iterations are required overall.
With only $N \in \Oh(n)$ intersections remaining, the solution is computed by enumeration and selection.

The only non-standard building blocks required for the algorithm are intersection counting, the sampling of $n$ intersections and the enumeration of intersections, all in a given range.
Due to the composability of oblivious programs the use of oblivious replacements in \cref{alg:MatIntSelection} leads to an oblivious algorithm; see \cref{sec:ObliviousnessAndRuntime}.

\subsection{Known oblivious building blocks}%
\label{sec:KnownObliviousPrimitives}

\paragraph{Sorting} In the ORAM model, $n$ elements can be sorted by a comparison-based algorithm in optimal $\Th(n \log n)$ time, e.\,g., using optimal sorting networks~\parencite{ajtai_on_1983}.
We refer to this building block as $\fun{Sort}$.

For the application in this paper we require a sorting algorithm which is fast in practice.
To this end we can use \emph{bucket oblivious sort}~\parencite{farach-colton_bucket_2020}:
The algorithm works by performing an oblivious random permutation step, followed by a comparison-based sorting step.
The random permutation ensures that the complete algorithm is oblivious, even if the sorting step is not.

Choosing suitable parameters ($Z \in \Th(\log n)$) we achieve a failure probability bounded by a constant~\parencite[Lemma~3.1]{farach-colton_bucket_2020}.
Since failure of the random permutation leaks nothing about the input, we can repeat this step until it succeeds.
Together with an optimal comparison-based sorting algorithm this results in an implementation for $\fun{Sort}$ that has an expected runtime of $\Oh(n \log n \log^2\log n)$ and is fast in practice.

\paragraph{Merging} The building block $\fun{Merge}(A, B)$ takes two individually sorted arrays $A$ and $B$ and sorts the concatenation $A \concat B$.
There is a lower bound of $\Om(n \log n)$ for merging in the indivisible oblivious RAM model.\footnote{
  \textcite{lin_can_2018} prove a lower bound of $\Om(n \log n)$ for stable partition in the indivisible oblivious RAM model that also applies to merging.
  This bound applies even when restricting the input to arrays of (nearly) equal size~\parencite{miltersen_asymptotic_1996}.}
\emph{Odd-even merge}~\parencite{batcher_sorting_1968,knuth_sorting_1973} is an optimal merge algorithm (in the indivisible oblivious RAM model) with a good performance in practice.

\paragraph{Selection} $\fun{Select}(A, k)$ denotes the selection of an element with rank $k$, i.e., a $k$-th smallest element, from an unordered array $A$.
An optimal algorithm in the RAM model is Blum's linear-time selection
algorithm. This problem can be solved by a near-linear oblivious algorithm~\parencite{lin_can_2019}, but current implementations suffer from high constant runtime factors due to the use of oblivious partitioning.
For practical efficiency, we realize selection by sorting the given array $A$.
Since for our application we may leak the index $k$, only one additional probe is required.
We thus have leakage $\fun{leak}\colon \tuple{A, k} \mapsto \tuple{\abs{A}, k}$.

\paragraph{Filtering} Filtering a field $A$ with a predicate $\operatorname{Pred}$ ($\fun{Filter}_{\operatorname{Pred}}(A)$) extracts a sorted sub-list $A'$ with all elements for which the predicate is true.
The elements $a \in A'$ are stable swapped to the front of $A$ and the number $\abs{A'}$ of such elements is returned.
Since filtering can be used to realize stable partitioning, the lower runtime bound of $\Om(n \log n)$ for inputs of length $n$ in the indivisible ORAM model~\parencite{lin_can_2019} applies.
This operation can be implemented with runtime $\Oh(n \log n)$ using oblivious routing networks~\parencite{goodrich_data-oblivious_2011}.

\paragraph{Appending} The building block $\fun{Append}(A, B, i, k)$ is given two fields $A$ and $B$ as well as two indices $i$ and $k$ and appends the first $k$ elements of $B$ to the first $i$ elements of $A$.
This ensures that $A'$ after the operation contains $A[0 : i] \concat B[0 : k]$ in the first $i + k$ positions.
All other positions may contain arbitrary elements.
This operation can also be implemented with runtime $\Oh(n \log n)$ by using oblivious routing networks.

\subsection{New oblivious building blocks}%
\label{sec:NewObliviousPrimitives}

\subsubsection{Inversion and intersection counting}%
\label{sec:InversionCounting}

The number of inversions in an array $A$ is defined as the number of pairs of indices $i, j \in [\abs{A}]$ with $A[i] > A[j]$ and $i < j$.
In the RAM model, an optimal comparison-based approach to determine
the number of inversions is a modified merge sort.
Our oblivious merge-based inversion counting $\fun{Inversions}$ generalizes this to an arbitrary merge algorithm (with indivisible keys).

As noted by \textcite{cole_optimal-time_1989}, inversion counting can be used to calculate the number of intersections of a set of lines in a given range $\intCO{a_x, b_x}$.
This is by ordering the lines according to the $y$-coordinates at $x = a_x$ ($\lea$) and counting inversions relative to the order at $x = b_x$ ($\leb$).
We use this to implement $\fun{IntCount}(\dual{P}, a, b)$ for determining the number of intersections of lines $\dual{P}$:

\begin{algorithm}
  \caption{Intersection counting.}%
  \label{alg:IntCount}
  
\begin{algorithmic}[1]
  \Function{\fun{IntCount}}{$\dual{P}, a, b$}
    \State $\dual{P}_a \gets \fun{Sort}_a(\dual{P})$%
      \Comment{Sort according to $\lea$}
    \State\Return $\fun{Inversions}_b(\dual{P}_a)$%
      \Comment{Count inversions}
  \EndFunction
\end{algorithmic}

\end{algorithm}

Given an array $A$ of elements (in our
case: lines sorted according to $\leq_a$), $\fun{Inversions}$
computes all inversions (in our case: corresponding to intersections in $\intCO{a_x, b_x}$) while at the same time sorting $A$. $\fun{Inversions}$ recursively computes all inversions
in the first half $A_{\mathrm{lo}}$ and in the second half $A_{\mathrm{hi}}$ of the
input. The inversions induced by lines from different halves, i.\,e.,
the number of pairs $\tuple{a, b} \in A_{\mathrm{lo}} \times A_{\mathrm{hi}}$ with $a < b$,
then is computed by $\fun{BiInversions}(A_{\mathrm{lo}}, A_{\mathrm{hi}})$ which leverages
that $A_{\mathrm{lo}}$ and $A_{\mathrm{hi}}$ may be assumed inductively to be sorted.

\begin{algorithm}
  \caption{Merge-based inversion counting.}%
  \label{alg:OblivInvCount}
  
\begin{algorithmic}[1]
  \Procedure{\fun{BiInversions}}{$A_{\mathrm{lo}}, A_{\mathrm{hi}}$}
    \State $\marker{e} \gets 0$, $e \in A_{\mathrm{lo}}$; $\marker{e} \gets 1$, $e \in A_{\mathrm{hi}}$%
      \Comment{Label}
    \State $A \gets \fun{Merge}(A_{\mathrm{lo}}, A_{\mathrm{hi}})$%
      \label{alg:OblivInvCount:l:Merge}%
      \Comment{Permute labels as well}
    \State $I \gets 0; c \gets 0$ \Comment{No.\ of inversions / counter}
    \For{$e \gets A[0], \ldots, A[\abs{A} - 1]$}%
      \label{alg:OblivInvCount:l:LoopStart}%
      \If{$\marker{e} = 0$}
        \State $I \gets I + c$ \Comment{Record inversions}
      \Else \Comment{$\marker{e} = 1$}
        \State $c \gets c + 1$ \Comment{Increase counter}
      \EndIf
    \EndFor%
      \label{alg:OblivInvCount:l:LoopEnd}
    \State\Return $I$
  \EndProcedure
\end{algorithmic}

\end{algorithm}

To do this obliviously, $\fun{BiInversions}$ labels the elements according to which half they come from, then merges the labeled elements, and finally uses these labels to simulate the standard RAM merging algorithm.
For this algorithm to work correctly, in general a stable merge algorithm is
required, which sorts elements from the first half before elements
from the second half if they are equal with regard to the order.  We can
drop this requirement since we only work on totally ordered inputs of unique
elements.

The correctness of inversion counting follows from the correctness of
$\fun{BiInversions}$. Independent of the particular merge algorithm
used, $\fun{BiInversions}$ is functionally equivalent to the merging
step of the RAM algorithm.  The runtime of $\fun{BiInversions}$ is
dominated by merging, thus $\fun{Inversions}$ runs in time
$\Oh(n \log^2 n)$.  As merging has a lower bound of $\Om(n \log n)$ in
the indivisible ORAM model and, even without assuming
indivisibility, no ORAM algorithm with
runtime $\oh(n \log n)$ is known, any divide-and-conquer approach
based on 2-way merges currently incurs a runtime of
$\Om(n \log^2 n)$.

Except for the invocation of $\fun{Merge}$, all operations in $\fun{BiInversions}$ can be realized obliviously by a constant number of linear scans over the elements $A \coloneqq A_{\mathrm{lo}} \concat A_{\mathrm{hi}}$ and their labels.
Since $\fun{Merge}$ is oblivious, the obliviousness of $\fun{BiInversions}$ follows from the composability of oblivious programs.
The obliviousness of $\fun{Inversions}$ again follows from composability.
Finally, since the input is divided depending only on the size of the input, $\fun{Inversions}$ and $\fun{IntCount}$ only leak the input size.

\paragraph{Defining a suitable order} Intuitively, the algorithm sorts the input (lines sorted according to $\leq_a$) according to $\leq_b$ while recording intersection points. At each such point, two lines adjacent in the underlying order exchange their position. In addition to handling boundary cases correctly, it is not immediately obvious how this approach can be modified to handle non-general positions, since there may be an arbitrary number of lines intersecting in a single point.

To be able to handle non-general positions obliviously, we do not explicitly use the $y$-coordinates to define $\lea$ and $\leb$.
Instead, we --- more generally --- order the lines by their intersection points in relation to a given intersection~$p$.
For this, we use $\intersection{i}{j} \coloneqq \ell_i \cap \ell_j$ to denote the intersection point of two lines $\ell_i \neq \ell_j$.

\begin{definition}%
  \label{def:LineOrder}
  Let $P_\times \coloneqq \dual{L} \cup \set{-\infty, +\infty}$
  be the set of all intersections formed by $\dual{P}$ with additional elements $-\infty$ and $+\infty$.
  Let also $\preceq$ be an order over $P_\times$ (with the corresponding strict order $\prec$).
  For each $p \in P_\times$, we define the binary relation $\lep$ over $\dual{P}$ as
  \begin{equation*}
    \ell_1 \lep \ell_2 \logeq \begin{cases}
      \top & \text{if $\ell_1 = \ell_2$} \\
      p \preceq \intersection{1}{2} & \text{if $m(\ell_1) > m(\ell_2)$} \\
      \intersection{1}{2} \prec p & \text{if $m(\ell_1) < m(\ell_2)$}
    \end{cases}
  \end{equation*}
\end{definition}

For lines in general position, this definition essentially captures the ordering by $y$-coordinate:
If the slope of $\ell_1$ is larger than the slope of $\ell_2$, $\ell_1$ lies below $\ell_2$ if their intersection point lies to the right of $p$; if the slope of $\ell_1$ is smaller than the slope of $\ell_2$, $\ell_1$ lies above $\ell_2$ if and only if their intersection point lies to the right.

\begin{lemma}[Correctness of $\fun{IntCount}$]%
  \label{lma:IntCountCorrectness}
  Let $P_\times$, $\preceq$, $\prec$, and $\lep$ be as defined above.
  If
  \begin{compactitem}
    \item[(a)] $\preceq$ is a total order over $P_\times$ with minimum $-\infty$ and maximum $+\infty$ and
    \item[(b)] $\lep$ is a total order over $\dual{P}$ for all $p \in P_\times$,
  \end{compactitem}
  then, given $a, b \in P_\times$ with $a \preceq b$,   $\fun{IntCount}$ determines the number of intersections $p \in \dual{L}$ with $a \preceq p \prec b$.
\end{lemma}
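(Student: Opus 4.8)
The plan is to reduce the lemma to a single per-pair equivalence and then sum it up. Concretely, I would show that $\fun{IntCount}(\dual{P}, a, b)$ returns the number of unordered pairs $\set{\ell_1, \ell_2}$ of distinct lines of $\dual{P}$ with $a \preceq \intersection{1}{2} \prec b$, which, in general position, is exactly $\abs{\setCond{p \in \dual{L}}{a \preceq p \prec b}}$. First I would record two consequences of the hypotheses. By (b) the relation $\lep$ is total, which already forces the lines in $\dual{P}$ to have pairwise distinct slopes (two distinct parallel lines would be incomparable under $\lep$); hence any two distinct lines $\ell_1 \neq \ell_2$ of $\dual{P}$ meet in a well-defined point $\intersection{1}{2}$ and satisfy $m(\ell_1) \neq m(\ell_2)$. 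Moreover $\lea$ and $\leb$ are total orders on the pairwise distinct lines of $\dual{P}$, so $\fun{Sort}_a$ produces the uniquely determined $\lea$-sorted array $\dual{P}_a$, and, since $\fun{Inversions}$ on a totally ordered input of unique elements is functionally equivalent to the RAM merge-sort inversion count (established above), $\fun{Inversions}_b(\dual{P}_a)$ returns exactly the number of index pairs $i < j$ with $\dual{P}_a[i] >_b \dual{P}_a[j]$ --- equivalently, the number of unordered line pairs whose $\lea$-order and $\leb$-order disagree.

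The core of the proof is the following equivalence, to be shown for each pair of distinct lines $\ell_1 \neq \ell_2$ of $\dual{P}$: the pair is counted by $\fun{Inversions}_b(\dual{P}_a)$ if and only if $a \preceq \intersection{1}{2} \prec b$. Assume w.l.o.g.\ $m(\ell_1) > m(\ell_2)$. By \cref{def:LineOrder}, $\ell_1 \lea \ell_2 \iff a \preceq \intersection{1}{2}$ and $\ell_1 \leb \ell_2 \iff b \preceq \intersection{1}{2}$, and --- using $\intersection{1}{2} = \intersection{2}{1}$ together with totality of $\preceq$ --- the reversed relations hold in the complementary cases. Splitting on which of $\ell_1, \ell_2$ precedes the other in $\dual{P}_a$ and recalling that the pair is counted precisely when its $\lea$- and $\leb$-orders disagree, one finds that a counted pair must satisfy one of two mutually exclusive conditions: $a \preceq \intersection{1}{2}$ and $\intersection{1}{2} \prec b$ (i.e.\ $a \preceq \intersection{1}{2} \prec b$); or $\intersection{1}{2} \prec a$ and $b \preceq \intersection{1}{2}$. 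Since by (a) $\preceq$ is a transitive total order and $a \preceq b$, the latter would force $b \preceq \intersection{1}{2} \prec a \preceq b$, a contradiction; it is therefore impossible, and the equivalence follows.

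Summing this equivalence over all pairs of distinct lines of $\dual{P}$ shows that $\fun{IntCount}(\dual{P}, a, b)$ returns $\abs{\setCond{\set{\ell_1,\ell_2}}{a \preceq \intersection{1}{2} \prec b}}$; under general position distinct line pairs have distinct intersection points, so this equals $\abs{\setCond{p \in \dual{L}}{a \preceq p \prec b}}$, as required. I expect the only real obstacle to be organizational rather than mathematical: carefully matching the case analysis --- on the relative slopes of $\ell_1, \ell_2$ and on their relative position in $\dual{P}_a$ --- against the precise notion of inversion used by $\fun{Inversions}$, and confirming that the boundary situations $a = -\infty$ and $b = +\infty$ need no separate treatment, since these are merely the extremal elements of the total order $\preceq$ and \cref{def:LineOrder} and the argument above apply verbatim. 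No geometric input beyond \cref{def:LineOrder} and the distinctness of slopes guaranteed by (b) is needed.
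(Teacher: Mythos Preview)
Your proposal is correct and follows essentially the same route as the paper: both identify what $\fun{IntCount}$ counts as the pairs $\set{\ell_1,\ell_2}$ whose $\lea$- and $\leb$-orders disagree, fix w.l.o.g.\ $m(\ell_1)>m(\ell_2)$, and use the definition of $\lep$ together with $a\preceq b$ to conclude that disagreement is equivalent to $a\preceq\intersection{1}{2}\prec b$. Your write-up is simply more explicit --- you spell out that hypothesis (b) forces distinct slopes, carry the case split through both disagreement directions, and note the general-position step from pairs to points in $\dual{L}$ --- but none of this diverges from the paper's argument.
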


\begin{proof}
   $\fun{IntCount}$ sorts according to the order $\lea$ and then counts inversions according to the order $\leb$.
  The algorithm thus exactly counts the number of unique pairs $\set{\ell_1, \ell_2} \subseteq \dual{P}$ (assuming w.l.o.g.\ $m(\ell_1) > m(\ell_2)$) for which $(\ell_1 \lea \ell_2) \not= (\ell_1 \leb \ell_2)$.
  Since $\preceq$ is a total order and $a \preceq b$ this can only occur if $\ell_1 \lea \ell_2 \land \ell_1 \not\leb \ell_2$.
  Then $a \preceq \intersection{1}{2} \prec b$ follows directly from the definition of $\lep$, thus  $\fun{IntCount}$ counts exactly the number of intersections in the range $\intCO{a, b}$.
\end{proof}

Since we want to identify the intersection with median $x$-coordinate, the intersections need to be ordered primarily by their $x$-coordinate. If all intersection points have distinct $x$-coordinates --- which is the case for lines $\dual{P}$ in general position --- we have:

\begin{remark}
  Let $\dual{P}$ be in general position and $\preceq$ be defined as $p \preceq q \logeq p_x \leq q_x$
  for $p, q \in P_\times$ with special cases $-\infty \preceq p$ and $p \preceq +\infty$ for all $p \in P_\times$.
  Then both conditions in \cref{lma:IntCountCorrectness} are satisfied.
\end{remark}

We will prove this more generally in \cref{sec:IntersectionPointsIdenticalXCoordinates}.

The intersection point of two given lines can be determined in constant time, so $\lep$ can be evaluated in constant time as well.
As such the runtime of $\fun{IntCount}$ is dominated by $\fun{Inversions}$ and thus $\Oh(n \log^2 n)$ for $n$ given lines.
The method is oblivious by composability.

\subsubsection{Intersection sampling and enumeration}%
\label{sec:IntersectionSamplingEnumeration}

The last building blocks to consider are the independent sampling as well as the enumeration of intersection points from a given range $\intCO{a, b}$.
We need to avoid calculating all intersections explicitly, as this would result in a runtime of $\Oh(n^2)$.
Recall that sampling can be done efficiently in the RAM model by modifying the standard intersection counting algorithm:
First, a set $K$ of $k$ indices from the range $[\fun{IntCount}(\dual{P}, a, b)]$ are sampled and then the intersection count is computed while iterating over the generated indices, reporting the corresponding intersections on the fly~\parencite{matousek_randomized_1991}.

Unfortunately, this approach is not oblivious:
First, synchronized iterations (such as over $K$ and the set of intersections generated) are not oblivious in general as step widths depend on the data values encountered. Second, reporting an intersection on the fly leaks information about the lines inducing it.

We address these challenges in the following way. Just as we have done in $\fun{BiInversions}$, we simulate a synchronized traversal over arrays $A$ and $B$ by first sorting the (labeled) elements and then iterating over their concatenation $A \concat B$.
For each element, we decide in private memory how to process the element based on its label.

To avoid leaking information about the two lines inducing a single
intersection, we operate on batches producing partial results padded
to their maximum possible length where needed.  This way we do not
leak the number of samples from a specific sub-range of the input.

We combine sampling and enumerating into a single building block
$\fun{IntCollect}(\dual{P}, a, b, K)$; $K$ contains the indices
of the intersections to sample in ascending order.

\begin{algorithm}
  \caption{Enumerating specified intersections.}%
  \label{alg:OblivIntCollect}
  
\begin{algorithmic}[1]
  \Function{\fun{IntCollect}}{$\dual{P}, a, b, K$}%
      \Comment{$a \prec b$, $\abs{K} > 0$}
    \State $k' \gets 0$; $K' \gets \fun{array}[\abs{K}]$%
      \Comment{Intersection storage}
    \State $\dual{P}_a \gets \fun{Sort}_a(\dual{P})$%
      \Comment{Sort according to $\lea$}
    \State $I \gets 0$ \Comment{Intersection counter}
    \For{$l \gets 0, \ldots, \ceil{\log_2 \abs{\dual{P}}} - 1$}%
    \Comment{All layers}
      \State $\fun{DetermineLineIndices}_b(\dual{P}_a, I, l)$%
        \label{alg:OblivIntCollect:l:Indices}%
        \Comment{Upd.\ $I$}
      \State $X \gets \fun{MatchAgainstLines}(\dual{P}_a, K, l)$%
        \label{alg:OblivIntCollect:l:Matching}
      \State $\fun{StoreIntersections}(X, K', k')$%
        \label{alg:OblivIntCollect:l:Storing}%
        \Comment{Upd.\ $k'$}
    \EndFor
    \State\Return $K'$
  \EndFunction
\end{algorithmic}

\end{algorithm}

From a high-level perspective, the algorithm first sorts the input according to $\lea$ and then iteratively implements a bottom-up divide-and-conquer strategy:
As in the RAM algorithm sketched before, unique consecutive indices are (implicitly) assigned to all encountered intersection points.
Note that, as we randomly sample/enumerate intersections, we may assign indices to the intersections arbitrarily.
All lines are explicitly labeled with indices so that --- given the index for an intersection --- the lines inducing that intersection can easily be identified.

The intersection indices $K$ are then matched against the lines, determining the inducing lines of each intersection.
Finally, we store the pair of inducing lines as intersection in $K'$.
These three steps are repeated for each layer $l$ so that after processing all layers the inducing lines of all specified intersections are known.

We now discuss the routines called for each layer $l$.

\begin{table*}\centering
  
\setlength\tabcolsep{4pt}
\begin{tabular}{R @{\qquad} C C | C C | C C | C C c C C C C | C C C C}
  & \multicolumn{8}{c}{\emph{before merge}} & & \multicolumn{8}{c}{\emph{after merge}} \\
  \markName{i} & \multicolumn{4}{C |}{0} & \multicolumn{4}{C}{1} & \multirow{5}{*}{\qquad\ensuremath{\rightarrow}\qquad} & \multicolumn{4}{C |}{0} & \multicolumn{4}{C}{1} \\
  \markName{half} & 0 & 0 & 1 & 1 & 0 & 0 & 1 & 1 & & 0 & 1 & 1 & 0 & 1 & 1 & 0 & 0 \\
  e \in L & \ell_0 & \ell_6 & \ell_1 & \ell_5 & \ell_4 & \ell_7 & \ell_2 & \ell_3 & & \ell_0 & \ell_1 & \ell_5 & \ell_6 & \ell_2 & \ell_3 & \ell_4 & \ell_7 \\
  \markName{0-index} & & & & & & & & & & 3 & & & 3 & & & 5 & 7 \\
  \markName{1-index} & & & & & & & & & & 0 & 0 & 1 & 2 & 2 & 3 & 2 & 2 \\
\end{tabular}

  \caption{Labels assigned by $\fun{DetermineLineIndices}$ in layer $l
    = 1$ for an input of $8$ lines $\ell_0, \ldots, \ell_7$, numbered
    according to their $\leb$-order.
    In layer $0$, $I = 3$ inversions have been counted.
    Layer $1$ contains $6$ inversions.}%
  \label{tbl:DetermineLineIndicesExample}
\end{table*}

\begin{table*}\centering
  
\begin{tabular}{r c | C C C C C C | c}
  assigned index & & 3 & 4 & 5 & 6 & 7 & 8 & \\
  \hline
  intersection point & \multirow{4}{*}{} & \intersection{6}{1} & \intersection{6}{5} & \intersection{4}{2} & \intersection{4}{3} & \intersection{7}{2} & \intersection{7}{3} & \multirow{4}{*}{} \\
  \markName{i} & & 0 & 0 & 1 & 1 & 1 & 1 & \\
  \markName{0-index} (index of inducing 0-line) & & 3 & 3 & 5 & 5 & 7 & 7 & \\
  \markName{1-index} (index of inducing 1-line) & & 0 & 1 & 2 & 3 & 2 & 3 & \\
\end{tabular}

  \caption{Result of the merging step shown in \cref{tbl:DetermineLineIndicesExample}.
    Note that the indices are only assigned conceptually and the intersections are not computed explicitly.
    The assigned index is equal to $\markName{0-index} + (\markName{1-index} - \markName{i} \cdot 2^l)$.}%
  \label{tbl:IntersectionPointIndices}
\end{table*}

\paragraph{Assigning indices to lines}

The first sub-routine called for each layer $l$ is
$\fun{DetermineLineIndices}$. Building on the general ideas used in
\cref{alg:OblivInvCount}, it iterates over pairs of subarrays of $2^l$
lines each, updates the intersection counter $I$, and assigns to each
line in $\dual{P}$ four indices defined below that guide the oblivious
sampling.

\begin{compactitem}
\item The index \markName{i} of a line $\ell$ (or: $\marker[i]{\ell}$)
  denotes the pair of blocks (on the current layer) containing
  $\ell$. On each layer, we process only intersections of lines with
  the same index \markName{i}.
\item The index \markName{half}  of a line $\ell$ indicates whether  $\ell$ was stored in the
  first subarray $\dual{P}_{\mathrm{lo}}$ ($\marker[half]{\ell} = 0$, ``0-line'') or in
  the second subarray $\dual{P}_{\mathrm{hi}}$ ($\marker[half]{\ell} = 1$, ``1-line'').
   For each pair of subarrays, we process only intersections of lines with
   different indices \markName{half}.
\item For a 0-line $\ell_0$, the index \markName{0-index} is the
  offset of the first intersection induced by $\ell_0$.
  By construction all intersections induced by $\ell_0$ in this layer have consecutive indices.
  For a 1-line, \markName{0-index} stores the number of intersections
  counted thus far, i.\,e., all lines are sorted by their values of \markName{0-index} after merging.
\item For a 1-line $\ell_1$, \markName{1-index} is the offset among all 1-lines in this layer.
  For a 0-line $\ell_0$, \markName{1-index} stores the number of intersection points induced
  by $\ell_0$.
\end{compactitem}

The resulting algorithm is given as \cref{alg:ModifiedInversions}.
\Cref{tbl:DetermineLineIndicesExample} shows the labels assigned by \cref{alg:ModifiedInversions} in layer $l = 1$ when processing a sample input.
The labels correspond to the indices implicitly assigned to the intersection points shown in \cref{tbl:IntersectionPointIndices}.
The indices are assigned to the lines so that an intersection with index $i \in K$ is induced by a 0-line $\ell_0$ with next lower \markName{0-index} relative to $i$.
The \markName{1-index} of the inducing 1-line $\ell_1$ then is
\begin{equation*}
  \marker[1-index]{\ell_1} = \underbrace{i - \marker[0-index]{\ell_0}}_{\mathclap{\text{relative index of $\ell_1$ in the current pair of blocks}}}~+~\marker[i]{\ell_0} \cdot 2^l
\end{equation*}

Like $\fun{BiInversions}$ the runtime of $\fun{BiInversions}'_b$ is dominated by the call to $\fun{Merge}$ and thus $\Oh(s \log s)$ for sorted $\dual{P}_{\mathrm{lo}}, \dual{P}_{\mathrm{hi}}$ of size $s$.
This means that the runtime of $\fun{DetermineLineIndices}$ is in $\Oh\paren*{\frac{n}{s} \cdot s \log s} \subseteq \Oh(n \log n)$.
$\fun{BiInversions}'_b$ is oblivious like $\fun{BiInversions}$ is.
Since the main loop for $\fun{DetermineLineIndices}$ only depends on $n$ and $l$, as does the size of the input to $\fun{BiInversions}'_b$, the procedure is oblivious by composability with regard to leakage
$\fun{leak}\colon \tuple{\dual{P}_a, I, l} \mapsto \tuple{\abs{\dual{P}_a}, l}$.

\begin{algorithm}[t]
  \caption{Assigning indices to lines.}%
  \label{alg:ModifiedInversions}
  
\begin{algorithmic}[1]
  \Procedure{\fun{DetermineLineIndices}}{$\dual{P}_a, I, l$}
  \State $c^\ast \gets 0$ \Comment{Ctr.\ for 1-lines on level $l$}
    \For{$i \gets 0, \ldots, \ceil*{\frac{\abs{\dual{P}_a}}{2\cdot 2^l}}$}%
      \Comment{Pairs of subarrays}
      \State $\dual{P}_{\mathrm{lo}} \gets \dual{P}_a[2\cdot i \cdot 2^l: 2\cdot
      (i+1) \cdot 2^l -1]$
      \State $\dual{P}_{\mathrm{hi}} \gets \dual{P}_a[2\cdot (i+1) \cdot 2^l: 2\cdot
      (i+2) \cdot 2^l -1]$
      \State $\marker[i]{\ell} \gets i$ for all $\ell \in \dual{P}_{\mathrm{lo}} \concat \dual{P}_{\mathrm{hi}}$
      \State $\fun{BiInversions}'_b(\dual{P}_{\mathrm{lo}}, \dual{P}_{\mathrm{hi}}, I, c^\ast)$
    \EndFor
  \EndProcedure
  \vspace{.5em}
  \Procedure{\fun{BiInversions}'_b}{$\dual{P}_{\mathrm{lo}}, \dual{P}_{\mathrm{hi}}, I, c^\ast$}
    \State $\marker[half]{\ell} \gets 0$, $\ell \in \dual{P}_{\mathrm{lo}}$;
      $\marker[half]{\ell} \gets 1$, $\ell \in \dual{P}_{\mathrm{hi}}$
    \State $A \gets \fun{Merge}(\dual{P}_{\mathrm{lo}}, \dual{P}_{\mathrm{hi}})$%
      \label{alg:SampleLineMatching:l:Merge}%
      \Comment{Permute labels as well}
    \State $c \gets 0$%
      \Comment{Local counter for 1-lines}
    \For{$\ell \gets A[0], \ldots, A[\abs{A} - 1]$}%
      \State $\marker[0-index]{\ell} \gets I$
      \If{$\marker[half]{\ell} = 0$}
        \State $\marker[1-index]{\ell} \gets c$ \Comment{Number of int.\ for $\ell$}
        \State $I \gets I + c$ \Comment{Update intersection count}
        \Else
        \Comment{$\marker[half]{\ell} = 1$}
        \State $\marker[1-index]{\ell} \gets c^\ast$ \Comment{Record offset}
        \State $c \gets c + 1$ \Comment{Update local counter}
        \State $c^\ast\gets c^\ast+1$ \Comment{Update level counter}
      \EndIf
    \EndFor
  \EndProcedure
\end{algorithmic}

\end{algorithm}

\begin{algorithm*}
  \caption{Method for matching intersection indices against lines.}%
  \label{alg:SampleLineMatching}
  
\begin{algorithmic}[1]
  \Function{\fun{MatchAgainstLines}}{$\dual{P}_a, K, l$}%
      \Comment{Lines $\dual{P}_a$ already have the appropriate labels}
    \State $\marker[K]{i} \gets \top$ for all $i \in K$%
      \Comment{Mark intersection indices}
    \State $\marker[0-index]{i} \gets i + 0.5$ for all $i \in K$
    \State $X \gets \fun{Merge}_\markName{0-index}(\dual{P}_a, K)$%
      \label{alg:SampleLineMatching:l:Match0Start}%
      \Comment{Merge lines and intersection indices}
    \State $\ell_0 \gets \bot$; $\ell_1 \gets \bot$
      \Comment{$\marker[0-index]{\bot} \coloneqq \marker[1-index]{\bot} \coloneqq 0$}
    \For{$e \gets X[0], \ldots, X[\abs{X} - 1]$}%
      \Comment{Iterate over lines and indices, ignoring 1-lines}
      \If{$\neg\marker[K]{e} \land \marker[half]{e} = 0 \land \marker[1-index]{e} > 0$}%
        \Comment{Found 0-line inducing intersections}
        \State $\ell_0 \gets e$
      \ElsIf{$\marker[K]{e} \land \marker[0-index]{e} < \marker[0-index]{\ell_0} + \marker[1-index]{\ell_0}$}%
        \label{alg:SampleLineMatching:l:Match0Check}%
        \Comment{Found intersection index}
        \State $\marker[0-line]{e} \gets \ell_0$%
          \Comment{Mark index with inducing 0-line}
        \State $\marker[1-index]{e} \gets \marker[i]{\ell_0} \cdot 2^l + \marker[0-index]{e} - \marker[0-index]{\ell_0}$%
          \Comment{Calculate offset of the inducing 1-line}
      \EndIf
    \EndFor%
      \label{alg:SampleLineMatching:l:Match0End}
    \State $\fun{Sort}_{\markName{1-index}}(X)$%
      \label{alg:SampleLineMatching:l:Match1Start}
    \For{$e \gets X[0], \ldots, X[\abs{X} - 1]$}%
      \Comment{Iterate over lines and indices, ignoring 0-lines}
      \If{$\neg\marker[K]{e} \land \marker[half]{e} = 1$}%
        \Comment{Found 1-line}
        \State $\ell_1 \gets e$
      \ElsIf{$\marker[K]{e} \land \marker[1-index]{e} = \marker[1-index]{\ell_1} + 0.5$}%
        \label{alg:SampleLineMatching:l:Match1Check}%
        \Comment{Found intersection index}
        \State $\marker[1-line]{e} \gets \ell_1$%
          \Comment{Mark index with inducing 1-line}
      \EndIf
    \EndFor%
      \label{alg:SampleLineMatching:l:Match1End}
    \State\Return $X$
  \EndFunction
\end{algorithmic}

\end{algorithm*}

\paragraph{Matching lines and indices}

The second sub-routine, $\fun{MatchAgainstLines}$ (\cref{alg:SampleLineMatching}), pairs the lines inducing intersection points encountered in this layer that correspond to indices in $K$.

First, the indices are matched against the 0-lines.
This is done by assigning each index $i \in K$ the \markName{0-index} $i + 0.5$ and then merging them with the lines (by \markName{0-index}).
When iterating over the merged sequence $X$, the 0-line inducing an intersection from this layer is exactly the last 0-line encountered before the index (that induces at least one intersection).
Each index $i \in K$ is labeled with the inducing 0-line $\ell_0$ as
\markName{0-line} and with the index of the corresponding 1-line as
\markName{1-index}; the \markName{1-index} can be determined from the indices assigned to $\ell_0$.

Similarly, the indices are matched against the 1-lines by sorting the array $X$ of lines and indices according to the \markName{1-index}.
When iterating over the sorted sequence, the previous 1-line before each intersection index is the second line inducing the intersection.
Each $i \in K$ already assigned a \markName{0-line} can thus be labeled with the inducing 1-line $\ell_1$ as \markName{1-line}.

The runtime is dominated by the runtime for
merging and sorting and thus is in
$ \Oh((n + k) \log (n + k))$ for $k \coloneqq \abs{K}$ and
$n \coloneqq \abs{\dual{P}_a}$.  The algorithm is oblivious since, in
addition to merging and sorting, it only consists of linear scans over
the array $X$.  The input size for merging and sorting
is at most $n + k$.  Although not explicitly shown it is
trivial to implement the loop bodies obliviously with respect to both
memory access and memory trace-obliviousness.
By composability, $\fun{MatchAgainstLines}$ is oblivious with
regard to leakage
$\fun{leak}\colon \tuple{\dual{P}_a, K, l} \mapsto \tuple{\abs{\dual{P}_a}, \abs{K}}$.

\paragraph{Storing intersection}

The third subroutine called for each layer, $\fun{StoreIntersections}$ (\cref{alg:MatchedIntStorage}), stores the intersections (consisting of the pairs of lines matched in the previous step) in $K'$.
Exactly the indices with an assigned \markName{0-line} (and thus also \markName{1-line}) have been found in this layer.
For storing, the building block $\fun{Append}$ is used where $k'$ is the number
of indices already stored in $K'$.
$\fun{Append}$ is oblivious and thus does not leak the number of
intersections $k_\Delta$ from this layer.
The runtime of this last step is dominated by the filtering and appending steps and thus realizable with runtime
$\Oh(n \log n)$
where $n \coloneqq \abs{X}$.
The obliviousness follows from composability with regard to leakage
$ \fun{leak}\colon \tuple{X, K', k'} \mapsto \tuple{\abs{X}, \abs{K}}$.

\begin{algorithm}
  \caption{Storing the sampled indices}%
  \label{alg:MatchedIntStorage}
  
\begin{algorithmic}[1]
  \Procedure{\fun{StoreSampledIntersections}}{$X, K', k'$}
    \State $k_\Delta \gets \fun{Filter}_{\markName{K} \land \markName{0-line}}(X)$%
      \Comment{Matched indices}
    \State $\fun{Append}(K', X, k', k_\Delta)$%
      \Comment{Append (pairs of) lines}
    \State $k' \gets k' + k_\Delta$
  \EndProcedure
\end{algorithmic}

\end{algorithm}

\paragraph{Runtime and obliviousness}

Let $n \coloneqq \abs{\dual{P}}$ be the number of lines and $k \coloneqq \abs{K}$.
The runtime of $\fun{IntCollect}$ is dominated by the main loop.
This results in a total runtime of
$\Oh(\log n (n + k) \log (n + k)) \stackrel{k \in \Oh(n)}{=} \Oh(n \log^2 n)$.
The number of iterations and the sequence of values for $l$ only depends on $n$ and sub-routines only leak $n$, $k$, $n + k$, or $l$.
Thus, $\fun{IntSample}$ is oblivious by composability with regard to leakage
$\fun{leak}\colon \tuple{\dual{P}, a, b, K} \mapsto \tuple{\abs{\dual{P}}, \abs{K}}$.

\subsection{Analysis}%
\label{sec:ObliviousnessAndRuntime}

Since our implementation of \citeauthor{matousek_randomized_1991}'s
algorithm replaces only the building blocks used internally, the
correctness and runtime properties follow from the respective analyses
of the building blocks. We thus have:
\begin{lemma}[Correctness and runtime]%
  \label{lma:CorrectnessAndRuntime}
  Let $\fun{IntSelection}$ be \cref{alg:MatIntSelection} instantiated
  with the oblivious building blocks described above.  Then, given a set
  $\dual{P}$ of $n$ lines in general position and an integer
  $k \in \left[\binom{n}{2}\right]$, $\fun{IntSelection}(\dual{P}, k)$
  determines the intersection with $k$-th smallest $x$-coordinate in
  expected $\Oh(n \log^2 n)$ time.
\end{lemma}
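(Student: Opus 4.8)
The strategy is to exploit the fact that \cref{alg:MatIntSelection} is used \emph{verbatim} from \textcite{matousek_randomized_1991}, with only the internal building blocks replaced by oblivious ones having the \emph{same input/output behavior}. Thus I would separate the argument into two essentially independent parts: (i) functional correctness, which is inherited directly from \citeauthor{matousek_randomized_1991}'s analysis once we check that every replacement building block computes exactly the same function as the one it replaces, and (ii) the expected runtime, which requires bounding the work per loop iteration of our oblivious instantiation and multiplying by the expected iteration count — the latter again being inherited from \citeauthor{matousek_randomized_1991}.

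For part (i) I would verify, building block by building block, that the oblivious versions are functionally equivalent to the RAM primitives in \cref{alg:MatIntSelection}: $\fun{IntCount}$ counts intersections in a range (\cref{lma:IntCountCorrectness} together with the \cref{remark} for general position — note that in general position $\preceq$ by $x$-coordinate is a total order and each $\lep$ is a total order, so the hypotheses of \cref{lma:IntCountCorrectness} are met); $\fun{Sort}$, $\fun{Merge}$, $\fun{Select}$, $\fun{Filter}$, $\fun{Append}$ are standard and exact; and $\fun{IntCollect}$, when fed the index set $K$ produced by the sampling/enumeration logic, reports exactly the intersections with those indices (the index-assignment invariants stated for $\fun{DetermineLineIndices}$ and the matching argument for $\fun{MatchAgainstLines}$). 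Since $\fun{IntSample}$ and $\fun{IntEnumeration}$ are just $\fun{IntCollect}$ with $K$ chosen as $n$ random indices, resp.\ all indices, in the relevant range, the overall computation performed by our $\fun{IntSelection}$ is identical as a function to \citeauthor{matousek_randomized_1991}'s. Hence \citeauthor{matousek_randomized_1991}'s correctness proof applies unchanged, giving $\Pr{\fun{IntSelection}(\dual P,k)=p_k}=1$.

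For part (ii), \citeauthor{matousek_randomized_1991} shows the check in \cref{alg:MatIntSelection:l:Check} passes with probability bounded below by a constant, so the main loop runs an expected $\Oh(1)$ iterations, and each iteration strictly before termination has $N\in\Oh(n)$ afterwards; in particular all calls to $\fun{IntCollect}$ use $\abs{K}=n\in\Oh(n)$ samples and operate on $n$ lines. Per iteration the dominant costs are: $\fun{IntCount}$ at $\Oh(n\log^2 n)$ (dominated by $\fun{Inversions}$, which performs an $\Oh(\log n)$-depth recursion with $\Oh(n\log n)$ work from $\fun{Merge}$ at each level), $\fun{IntCollect}$ at $\Oh(n\log^2 n)$, and $\fun{Select}_x$ on an array of size $n$ at $\Oh(n\log n)$. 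So one iteration costs $\Oh(n\log^2 n)$, and by Wald's identity (iteration count independent of nothing data-dependent in a way that breaks the expectation — it is a stopping time with finite expectation and bounded per-step cost) the expected loop cost is $\Oh(n\log^2 n)$. The post-loop enumeration step calls $\fun{IntEnumeration}$ with $\abs{K}=N\in\Oh(n)$, costing $\Oh(n\log^2 n)$, followed by one $\Oh(n\log n)$ selection; these are $\Oh(1)$ invocations and add only $\Oh(n\log^2 n)$. Summing gives expected $\Oh(n\log^2 n)$ total.

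The main obstacle I anticipate is not any single hard estimate but rather the bookkeeping in part (i) for $\fun{IntCollect}$: one must argue that the conceptual per-layer index assignment is consistent across the $\Oh(\log n)$ layers (each intersection gets exactly one globally consistent index, split as $\markName{0\text{-}index}+(\markName{1\text{-}index}-\markName{i}\cdot 2^l)$ as in \cref{tbl:IntersectionPointIndices}) and that the two-phase matching in $\fun{MatchAgainstLines}$ (first by $\markName{0\text{-}index}$, then re-sorting by $\markName{1\text{-}index}$) recovers precisely the pair of lines inducing each requested index. This is where the proof has to do real work; everything else reduces to citing \citeauthor{matousek_randomized_1991} and to the per-building-block runtime bounds already established in \cref{sec:KnownObliviousPrimitives} and \cref{sec:NewObliviousPrimitives}. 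Since the lemma statement only claims correctness and runtime (not obliviousness), I would defer the obliviousness argument to the composability discussion and keep this proof focused on the two points above.
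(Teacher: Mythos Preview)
Your proposal is correct and follows the same approach the paper takes: the paper's own justification for this lemma is a single sentence noting that since only the internal building blocks are replaced, correctness and runtime are inherited from \citeauthor{matousek_randomized_1991}'s analysis together with the per-building-block bounds already established in \cref{sec:KnownObliviousPrimitives,sec:NewObliviousPrimitives}. Your write-up is considerably more detailed than what the paper provides, but the underlying decomposition into (i) functional equivalence of the replacements and (ii) $\Oh(1)$ expected iterations times $\Oh(n\log^2 n)$ work per iteration is exactly the intended argument.
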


We now turn our attention to the analysis of the proposed algorithm's obliviousness.
Since oblivious programs are composable, we can prove the security by considering the leakage of each oblivious building block.

\begin{lemma}[Obliviousness]%
  \label{lma:Obliviousness}
  Let $\dual{P}$ be a set of $n$ lines in general position such that
  $\binom{\abs{\dual{P}}}{2}$ is odd. If \cref{alg:MatIntSelection}
  is instantiated with the oblivious building blocks described above,
  $\fun{MedianSelection}(\dual{P}) \coloneqq
  \fun{IntSelection}(\dual{P}, k)$ with
  $k \coloneqq \frac{\binom{\abs{\dual{P}}}{2} - 1}{2}$ obliviously
  realizes the median intersection selection with respect to leakage
  $\fun{leak}(\dual{P}) \coloneqq \abs{\dual{P}}$.
\end{lemma}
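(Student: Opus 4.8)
The plan is to invoke composability of oblivious programs, which reduces the claim to (i) functional correctness of $\fun{MedianSelection}$ on inputs in general position with $\binom{n}{2}$ odd, and (ii) the statement that every building block invoked by \cref{alg:MatIntSelection} leaks at most the length of its input array(s), together with the fact that the control flow of \cref{alg:MatIntSelection} itself depends only on $n$. Point (i) is exactly \cref{lma:CorrectnessAndRuntime} specialized to $k = \frac{\binom{n}{2}-1}{2}$: since $\binom{n}{2}$ is odd this $k$ is an integer in $\left[\binom{n}{2}\right]$, and the intersection with $k$-th smallest $x$-coordinate is dual to a line of median slope, so $\fun{MedianSelection}$ is correct (this also gives $\Pr{\fun{MedianSelection}(\dual P) = f(\dual P)} = 1$ as required by \cref{def:ObliviousnessDefinition}).

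For the security part I would walk through \cref{alg:MatIntSelection} line by line and check that the probe sequence is a deterministic-in-structure concatenation of the probe sequences of the invoked oblivious primitives, interleaved with a constant number of private-memory operations and linear scans whose lengths depend only on $n$. The key observations are: the loop variables $n$, $N$, $a$, $b$, $j$, $j_a$, $j_b$, $k$, $m_{a'}$, $m_{b'}$ are all kept in private memory and thus their (data-dependent) values never appear in the probe sequence; the sampling size passed to $\fun{IntSample}$ is exactly $n$; the index $k - \fun{IntCount}(\dual P, -\infty, a)$ passed to $\fun{Select}_x$ is computed in private memory and, since $\fun{Select}$ may leak its rank argument, this causes at most one additional probe whose location is independent of which input of equal length we started from — and here the two inputs $\dual P, \dual P'$ have the same size, hence the same $n$, hence the same number of loop iterations is not guaranteed, which is the subtle point below. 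Each of $\fun{IntCount}$, $\fun{IntSample}$, $\fun{IntEnumeration}$ (instantiated via $\fun{IntCollect}$), $\fun{Select}_x$, $\fun{Filter}$, $\fun{Append}$ has already been shown oblivious with leakage bounded by input size (and, for $\fun{Select}$, the leaked rank, which we are permitted to leak), so by the composition theorem of \bareRef{Appendix A}{sec:ObliviousComposability} the whole algorithm is oblivious with leakage equal to the union of the exposed leakages, which is just $\abs{\dual P}$.

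The main obstacle is the data-dependent number of iterations of the $\fun{Do}\,\ldots\,\fun{DoWhile}$ loop: the loop terminates when $N \leq n$, and how many iterations this takes depends on the random tape \emph{and} on the input, so two inputs of the same size need not induce the same number of loop passes for a fixed random tape. This is exactly the situation \cref{def:ObliviousnessDefinition} is designed for — security is about equality of the \emph{distributions} $\Pr{\A{\prog(x)} = A}$, not a per-tape coupling — so the argument must be that for each fixed number $t$ of iterations the conditional probe distribution depends only on $n$, and the probability of performing exactly $t$ iterations depends only on $n$ as well. The latter holds because, in \citeauthor{matousek_randomized_1991}'s analysis, the success probability of the check in \cref{alg:MatIntSelection:l:Check} and the resulting reduction of $N$ are governed only by $n$ and the sampling distribution (sampling $n$ points uniformly with replacement from the current range), not by the geometric positions of the lines; I would state this as the one genuinely non-syntactic lemma and point to the composability discussion, noting that obliviousness of a randomized program with a data-dependent but \emph{distributionally input-independent} loop count follows by conditioning on the iteration count and applying the composition theorem inside each conditional branch. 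Everything else — boundary cases $a = -\infty$, $b = +\infty$ handled by \cref{lma:IntCountCorrectness}, the $\fun{Select}_x$ leakage being within budget, general position guaranteeing $\abs{\dual L} = \binom{n}{2}$ and all $x$-coordinates distinct — is routine verification against the already-established properties of the primitives.
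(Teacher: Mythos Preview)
Your overall plan---correctness from \cref{lma:CorrectnessAndRuntime}, security from composability plus an argument that the control-flow distribution depends only on $n$---matches the paper. You also correctly flag the one real difficulty, the data-dependent number of loop iterations. But your treatment of what is actually leaked inside the loop has a gap.

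You assert that ``the loop variables $n$, $N$, $a$, $b$, $j$, $j_a$, $j_b$, $k$, $m_{a'}$, $m_{b'}$ are all kept in private memory and thus their (data-dependent) values never appear in the probe sequence.'' This is not true: $\fun{Select}$ has leakage $\tuple{\abs{A},k}$, so the two calls $\fun{Select}_x(R,j_a)$ and $\fun{Select}_x(R,j_b)$ in each iteration expose $j_a,j_b$; $\fun{IntEnumeration}$ (via $\fun{IntCollect}$) exposes $\abs{K}$, which equals the final $N$; and the closing $\fun{Select}_x$ exposes $k-\fun{IntCount}(\dual P,-\infty,a)$. All of these are functions of the running values $N$ and $N'\coloneqq\fun{IntCount}(\dual P,-\infty,a)$, not merely of the iteration count $t$. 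Consequently, conditioning on $t$ alone does not make the conditional probe distribution a function of $n$: for fixed $t$, two executions can still differ in the leaked ranks $j_a,j_b$ and in the final $N$.

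The paper closes exactly this gap by tracking, instead of $t$, the full sequence $B=\tuple{\tuple{N_0,N'_0},\ldots,\tuple{N_m,N'_m}}$ and arguing that its \emph{distribution} depends only on $n$. The point is that, because the lines are in general position, the $N$ intersections in the current range are totally ordered with distinct $x$-coordinates, so sampling $n$ of them uniformly and reading off $\fun{IntCount}(\dual P,-\infty,a')$ and $\fun{IntCount}(\dual P,-\infty,b')$ for the selected $a',b'$ is a function of $n$, $N$, $N'$ and fresh randomness only---the transition $\tuple{N_i,N'_i}\to\tuple{N_{i+1},N'_{i+1}}$ is a Markov step whose kernel depends on $n$ alone. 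Since $N_0=\binom{n}{2}$ and $N'_0=0$ are fixed by $n$, the law of $B$ is fixed by $n$; and every leaked quantity ($j_a,j_b$, the final $N$, the final rank) is a deterministic function of $B$ and $n$. Your sketch gestures at this (``the resulting reduction of $N$ [is] governed only by $n$''), but you only use it to control $\Pr{m=t}$, not the joint law of the leaked values. Replacing your $t$-conditioning by this $B$-argument repairs the proof and is precisely what the paper does.
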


\begin{proof}
  For the proof, we need to show both the correctness and the security of the algorithm for the specified inputs.
  The requirements above imply that $k$ is an integer, so correctness follows from \cref{lma:CorrectnessAndRuntime}.
  It remains to show the security.

  The oblivious algorithm directly uses the building blocks
  $\subProg \coloneqq \tuple{\fun{Sort}, \fun{Select}, \fun{IntCount}, \fun{IntCollect}}$
  The building block $\fun{IntCollect}$ is used to realize $\fun{IntSample}(\dual{P}, a, b, k)$ by first determining the number of inversions $i \coloneqq \fun{IntCount}(\dual{P}, a, b)$ in range $\intCO{a, b}$, independently sampling $k$ random indices $K \in {[I]}^k$, sorting the indices $K$ and calling $\fun{IntCollect}(\dual{P}, a, b, K)$.
  Similarly $\fun{IntCollect}$ is used to realize $\fun{IntEnumeration}(\dual{P}, a, b)$ by initializing an array $K \coloneqq \tuple{0, \ldots, i - 1}$ and calling $\fun{IntCollect}$.
  All building blocks are oblivious, with $\fun{Select}$ additionally leaking the rank of the selected element, $\fun{IntSample}$ leaking the number of samples via the size of $K$ and $\fun{IntEnumeration}$ leaking the number of intersections in the given range, also via the size of $K$.
  The arithmetic expressions and assignments operate on a constant
  number of memory cells and are trivially oblivious.

  We first examine the values of $n$, $k$, $N$ and $N' \coloneqq \fun{IntCount}(\dual{P}, -\infty, a)$ throughout the execution of the algorithm.
  The value of $n$ remains constant and $k$ is fixed relative to $n$, so we consider the sequence
$    B = \tuple{\tuple{N_0, N'_0}, \tuple{N_1, N'_1}, \ldots, \tuple{N_m, N'_m}}$
  where $N_i, N'_i$ are the values for $N, N'$ after the $i$-th iteration of the main loop for a total of $m$ loop iterations.
  In each iteration of the main loop, $n$ intersections $R$ are chosen uniformly at random from the range $\intCO{a, b}$.
  Since $\dual{P}$ is in general position, the intersections of distinct pairs of lines are distinct and all intersections are totally ordered.
  This implies that the random distribution of $\fun{IntCount}(\dual{P}, -\infty, c)$ for an intersection $c$ with fixed rank in $R$ only depends on $n$, $N$ and $N'$.
  Both $j_a$ and $j_b$ are fixed relative to $n$, $N$ and $N'$, so the random distribution of the next values for $N$ and $N'$ is solely determined by $n$ and the previous values.
  Since initially $N_0 = \binom{n}{2}$ and $N'_0 = 0$ and the sequence ends with $N_m \leq n$, the random distribution of the complete sequence $B$ is solely determined by $n$.

  It can easily be seen that each sequence $B$ of values for $N, N'$ determines the sequence $A$ of memory probes and sub-procedure invocations.
  This implies that any sequence $A$ is equally likely for inputs of the same size and thus that $\fun{MedianSelection}$ is secure by composability.
\end{proof}

\section{Non-general positions}%
\label{sec:NonGeneralPositions}

For simplicity of exposition, we assumed so far that the lines
$\dual{P}$ are in general position, i.\,e., that all intersection points
of two lines in $\dual{P}$ have distinct $x$-coordinates and that all
lines in $\dual{P}$ have distinct slopes. We also assumed that the
number of intersection points is odd, so that the
median intersection point selection problem can always be solved by
one call to a general intersection point selection algorithm; this
latter assumption can be removed by computing both the element with
rank $k_1 = \floor*{\frac{N - 1}{2}}$ and with rank
$k_2 = \ceil*{\frac{N - 1}{2}}$ (for $N = \binom{\abs{\dual{P}}}{2}$) and returning their mean if there is
an even number of intersections~\parencite{sen_estimates_1968}.
Since $k_1$ and $k_2$ differ by one at most by one,
both intersections can be computed simultaneously with no significant
impact on the runtime.

In RAM algorithms, degenerate configurations are a
nuisance, but often can be handled by generic approaches~\parencite[e.\,g.][]{edelsbrunner_simulation_1990,schirra_precision_1996,yap_geometric_1990}.
For our proposed algorithm, we must take care that these approaches do not affect the
obliviousness. In particular, the runtime of
the algorithm must not depend on the number of intersection points
with identical $x$-coordinates; this rules out the problem-specific
technique described by \textcite{dillencourt_randomized_1992} to
explicitly handle non-general position.

Regarding arithmetic precision, we note that the only arithmetic
computation performed on the input values is the calculation of the
$x$-coordinate of an intersection point.  Thus, recall we are working in the word
RAM model, for fixed-point input values with $b$ bits
of precision the use of $2 (b + 1)$ bits of precision suffices to
perform all arithmetic computations exactly.

\paragraph{Parallel lines}%
\label{sec:ParallelLines}
For technical reasons, we first discuss how to deal with inputs in which lines are parallel,
i.\,e., for which we cannot assume distinctness of slopes.

Earlier on, we noted that our algorithm is allowed to leak the values
of $N$ and $k$.\footnote{Assuming the leakage of $k$ allows us to
  treat the original algorithm of
  \citeauthor{matousek_randomized_1991} as a black box.  The author
  proves an expected lower bound on the reduction of $N$ per loop
  iteration which is independent of $k$.  This does not necessarily imply that the
  exact reduction of $N$ is in fact independent of $k$.}  This means
that we cannot introduce data-dependency of these values and this, in
turn, implies that (a) pairs of parallel lines cannot simply be
excluded and that (b) $k$ cannot be adjusted based on the number of
pairs of parallel lines.

We address this using a problem-specific, controlled version of the
symbolic perturbation of \textcite{edelsbrunner_simulation_1990,yap_geometric_1990}.  We perturb the lines $\dual{P}$ in
such a way that each pair $\set{\ell_1, \ell_2}$ of lines intersects
in a single intersection point $\intersection{1}{2}$.
Let $V$ be the set of intersections induced by lines that were
parallel previous to the perturbation. We ensure that $V$ is
partitioned into $V=V_- \cup V_+$ such that $V_-$ and $V_+$ are (nearly)
equally sized and each $v \in V_-$ has a $x$-coordinate less and each
$v' \in V_+$: By equally distributing these \enquote{virtual} intersections
to the left and to the right of all \enquote{real} intersections we
maintain data-independent values of $N = \binom{n}{2}$ and $k = \frac{N - 1}{2}$.

To realize this (symbolic) perturbation, we follow
\textcite{edelsbrunner_simulation_1990} and introduce an
infinitesimally small value $\varepsilon > 0$. We then identify each
line $\ell\colon x \mapsto m(\ell) \cdot x + b(\ell)$ with the
perturbed line $\ell'\colon x \mapsto m'(\ell) \cdot x + b'(\ell)$
where
$m(\ell') \coloneqq m(\ell) + s_\ell \cdot \#_\ell \cdot
\varepsilon^2$,
$b(\ell') \coloneqq b(\ell) + \#_\ell \cdot \varepsilon$,
$s_\ell \in \set{-1, +1}$ is a factor to achieve the distribution into
$V_-$ and $V_+$, and $\#_\ell \in \nat$ is a unique index given to each
line with respect to the order of the line offsets, i.\,e.
$\forall \ell_1, \ell_2 \in \dual{P}\colon b(\ell_1) < b(\ell_2)
\implies \#_{\ell_1} < \#_{\ell_2}$.  We obtain the set $\dual{P}'$ of
perturbed lines.

Due to space constraints, we omit the details of how to compute
$\#_\ell$ and $s_\ell$ as well how to
avoid leaking the number of \enquote{virtual} intersections.

\paragraph{Intersections with identical \texorpdfstring{\boldmath$x$}{x}-coordinates}%
\label{sec:IntersectionPointsIdenticalXCoordinates}

To handle intersections $p, q \in \dual{L}$ with identical
$x$-coordinates without significantly affecting the runtime of the algorithm, we
establish a total order $\preceq$ over all intersections, so that the
lines $\dual{P}$ can be totally ordered relative to each intersection
$p$ as in \cref{def:LineOrder}.  For this, we characterize an
intersection by its inducing pair of lines and define an order based
on these lines' properties:

\begin{definition}%
  \label{def:IntersectionOrder}
  Let $P_\times \coloneqq \dual{L} \cup \set{-\infty, +\infty}$
  be the set of all intersections with additional elements $-\infty$ and $+\infty$.
  Let each $p \in \dual{L}$ be formed by lines $p_\uparrow$ and $p_\downarrow$ with $m(p_\uparrow) > m(p_\downarrow)$.
  We define a total order $\preceq$ over $P_\times$ via:
  \begin{equation*}
    p \preceq q \logeq \begin{cases}
      p_x < q_x & \text{if $p_x \not= q_x$} \\
      m(p_\uparrow) < m(q_\uparrow) & \text{else if $p_\uparrow \not= q_\uparrow$} \\
      m(p_\downarrow) \leq m(q_\downarrow) & \text{else}
    \end{cases}
  \end{equation*}
  for $p, q \in P_\times \setminus \set{-\infty, +\infty}$ and with special cases $-\infty \preceq p$ and $p \preceq +\infty$ for all $p \in P_\times$.
  Let $\prec$ denote the corresponding strict order over $P_\times$.
\end{definition}

By construction, $\preceq$ is a (lexicographic) total order.  This is
ensured by the fact that all slopes are distinct.  This order suffices
to construct the total order over the lines in $\dual{P}$. To show
this, we need the following lemma:

\begin{lemma}%
  \label{lma:ThreePointLemma}
  Let $\ell_1, \ell_2, \ell_3$ be non-vertical lines with $m(\ell_1) < m(\ell_2) < m(\ell_3)$.
  Of the three intersections induced by these lines, the intersection
  $\intersection{1}{3}$ of the two lines with extremal slopes is the
  median with respect to the order $\preceq$ defined above.
\end{lemma}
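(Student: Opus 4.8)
The plan is to argue by cases on the relative $x$-coordinates of the three pairwise intersections, reducing the geometric picture to the definition of $\preceq$ in \cref{def:IntersectionOrder}. First I would set up notation: write $p_{13} = \intersection{1}{3}$, $p_{12} = \intersection{1}{2}$, $p_{23} = \intersection{2}{3}$, and let $x_{13}, x_{12}, x_{23}$ denote their respective $x$-coordinates. The geometric heart of the matter is the following elementary observation: since $m(\ell_1) < m(\ell_2) < m(\ell_3)$, the line $\ell_2$ has slope strictly between the other two, so at any point strictly to the right of $p_{13}$ the line $\ell_2$ lies strictly between $\ell_1$ and $\ell_3$ in $y$-coordinate (and strictly outside that interval to the left of $p_{13}$). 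From this one deduces that $\ell_2$ crosses $\ell_1$ and $\ell_3$ on opposite sides of $p_{13}$ — more precisely, that either $x_{12} \le x_{13} \le x_{23}$ or $x_{23} \le x_{13} \le x_{12}$ — so $x_{13}$ is always the median $x$-coordinate of the three. This already settles the lemma whenever the three $x$-coordinates are pairwise distinct.

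The remaining work is the degenerate case where two (or all three) of the intersections share an $x$-coordinate, which is exactly where the secondary tie-breaking clauses of \cref{def:IntersectionOrder} come into play. If all three coincide in $x$, then $\ell_1, \ell_2, \ell_3$ are concurrent, so $p_{12} = p_{13} = p_{23}$ as points; here I would unfold the tie-break rules using $p_\uparrow, p_\downarrow$ (the higher- and lower-slope inducing lines of each intersection): $p_{13}$ has $(p_\uparrow, p_\downarrow) = (\ell_3, \ell_1)$, $p_{12}$ has $(\ell_2, \ell_1)$, and $p_{23}$ has $(\ell_3, \ell_2)$, and a short comparison of slopes via the lexicographic rule shows $p_{12} \prec p_{13} \prec p_{23}$, so again $p_{13}$ is the median. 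If exactly two share an $x$-coordinate, the ordering among those two is decided by the same tie-break clauses while the third is separated strictly by the $x$-comparison; combining with the median-$x$ fact from the first paragraph (which forces the shared pair to be $\{p_{13}, p_{12}\}$ or $\{p_{13}, p_{23}\}$, never $\{p_{12}, p_{23}\}$ unless all three coincide) pins $p_{13}$ as the median in each subcase.

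The main obstacle I anticipate is not any single hard step but keeping the bookkeeping of the tie-break clauses straight: one must correctly identify $p_\uparrow$ and $p_\downarrow$ for each of the three intersections and check that, whenever two of them collide in $x$, the lexicographic slope comparison in \cref{def:IntersectionOrder} places $p_{13}$ between the other two rather than at an extreme. It is worth isolating the concurrency case ($\ell_1, \ell_2, \ell_3$ meet in a point) as a clean sub-lemma, since there the ordering is \emph{entirely} determined by slopes and the argument is purely combinatorial; the mixed case then follows by combining that sub-lemma with the geometric median-$x$ claim of the first paragraph. I would also double-check the boundary interaction with $\pm\infty$, but since all three intersections lie in $\dual{L} \subseteq P_\times \setminus \set{-\infty, +\infty}$ this is vacuous here.
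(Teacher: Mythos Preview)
The paper states \cref{lma:ThreePointLemma} without proof (it is presented as an auxiliary fact and immediately followed by the next lemma), so there is nothing to compare your argument against line by line. Your plan is sound and would yield a complete proof: the geometric median-$x$ claim is correct, and your verification of the tie-break in the concurrent case ($p_{12} \prec p_{13} \prec p_{23}$ via the lexicographic slope comparison) is exactly right.

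One simplification worth folding in: your ``exactly two intersections share an $x$-coordinate'' subcase is actually vacuous. Any two of the three pairwise intersections share a common line among $\ell_1,\ell_2,\ell_3$; if those two intersections also share an $x$-coordinate, they lie at the same point on that common line, so the third line passes through it too and all three intersections coincide. Hence the only dichotomy you need is \emph{all three $x$-coordinates distinct} (where the elementary geometric argument gives strict inequalities $x_{12} < x_{13} < x_{23}$ or $x_{23} < x_{13} < x_{12}$) versus \emph{concurrent lines} (where the slope tie-break of \cref{def:IntersectionOrder} applies). Dropping the phantom middle case shortens the write-up and removes the bookkeeping you flagged as the main obstacle.
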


Assuming only the distinctness of slopes (which, as discussed
above, may be assumed w.l.o.g.), we have:

\begin{lemma}
  Let $P_\times$, $\preceq$, and $\prec$ be as in
  \cref{def:IntersectionOrder}.  For each $p \in P_\times$, we have a
  total order $\lep$ over $\dual{P}$:
  \begin{equation*}
    \ell_1 \lep \ell_2 \logeq \begin{cases}
      \top & \text{if $\ell_1 = \ell_2$} \\
      p \preceq \intersection{1}{2} & \text{if $m(\ell_1) > m(\ell_2)$} \\
      \intersection{1}{2} \prec p & \text{if $m(\ell_1) < m(\ell_2)$}
    \end{cases}
  \end{equation*}
\end{lemma}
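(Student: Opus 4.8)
The goal is to verify that the relation $\lep$ from \cref{def:LineOrder}, when $\preceq$ is the lexicographic order of \cref{def:IntersectionOrder}, is a total order over $\dual{P}$ for every $p \in P_\times$. The plan is to check the three defining properties of a total order in turn: totality (for any two lines, at least one of $\ell_1 \lep \ell_2$, $\ell_2 \lep \ell_1$ holds), antisymmetry (not both can hold unless $\ell_1 = \ell_2$), and transitivity. Reflexivity is immediate from the first case of the definition.

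For totality and antisymmetry, I would first dispose of the easy case $\ell_1 = \ell_2$ and then assume $\ell_1 \neq \ell_2$, which by the distinctness of slopes means $m(\ell_1) \neq m(\ell_2)$. By symmetry of the roles, assume $m(\ell_1) > m(\ell_2)$. Then $\ell_1 \lep \ell_2 \iff p \preceq \intersection{1}{2}$, while $\ell_2 \lep \ell_1 \iff \intersection{2}{1} \prec p$, and since $\intersection{1}{2} = \intersection{2}{1}$ and $\preceq$ is a total order on $P_\times$ (by \cref{def:IntersectionOrder}, already established to be lexicographic-total), exactly one of $p \preceq \intersection{1}{2}$ and $\intersection{1}{2} \prec p$ holds. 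This gives both totality and antisymmetry simultaneously.

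The main obstacle is transitivity: given $\ell_1 \lep \ell_2$ and $\ell_2 \lep \ell_3$, show $\ell_1 \lep \ell_3$. After discarding trivial equalities, the three lines have three distinct slopes; relabel them (only for the purpose of this argument) as the slope-sorted triple and do a case analysis on where $\ell_1, \ell_2, \ell_3$ sit in that order — there are essentially the cases where the "middle" line by slope is $\ell_2$ versus where it is $\ell_1$ or $\ell_3$. The decisive ingredient is \cref{lma:ThreePointLemma}: among the three pairwise intersections of lines with slopes $m_1 < m_2 < m_3$, the intersection of the two extreme-slope lines is the $\preceq$-median. Concretely, when $\ell_2$ is the slope-median, the hypotheses place $p$ on a consistent side of $\intersection{1}{2}$ and $\intersection{2}{3}$, and \cref{lma:ThreePointLemma} says $\intersection{1}{3}$ lies $\preceq$-between those two, forcing $p$ onto the correct side of $\intersection{1}{3}$; when $\ell_2$ is an extreme, one hypothesis already constrains $p$ relative to $\intersection{1}{3}$ directly (since $\{\ell_1,\ell_3\}$ contains $\ell_2$'s "partner" among the extremes) and the other is used to pin the remaining comparison, again via the median property. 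I would organize this as: (i) reduce to the three-distinct-slopes case; (ii) split on which of the three lines is slope-median; (iii) in each sub-case translate the two hypotheses $\ell_1 \lep \ell_2$, $\ell_2 \lep \ell_3$ into $\preceq$-comparisons between $p$ and two of the three intersection points, invoke \cref{lma:ThreePointLemma} to sandwich the third intersection point, and conclude. The bookkeeping of signs (which side of $p$, and whether the relevant slope inequality is $>$ or $<$) is the part demanding care, but no genuinely new idea beyond \cref{lma:ThreePointLemma} and the totality of $\preceq$ is needed.
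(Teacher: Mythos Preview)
Your plan is correct and coincides with the paper's intended argument: the paper does not spell out a proof of this lemma but explicitly introduces \cref{lma:ThreePointLemma} precisely as the ingredient needed, and your use of it for transitivity (after dispatching totality and antisymmetry via the totality of $\preceq$) is exactly what is meant.

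One small slip in your sketch of the extreme-$\ell_2$ subcase: neither hypothesis ever compares $p$ with $\intersection{1}{3}$ directly --- the hypotheses $\ell_1 \lep \ell_2$ and $\ell_2 \lep \ell_3$ always give comparisons of $p$ against $\intersection{1}{2}$ and $\intersection{2}{3}$. What actually happens when $\ell_2$ has an extreme slope is that the two hypotheses together force a strict $\preceq$-ordering between $\intersection{1}{2}$ and $\intersection{2}{3}$; since one of these two is then the $\preceq$-median by \cref{lma:ThreePointLemma}, this pins $\intersection{1}{3}$ to the correct side of $p$. The bookkeeping you flagged as ``demanding care'' goes through in all six slope-order cases along these lines.
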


With the above definition, we can impose a total order on the set of
lines irrespective of whether or not their intersection points'
$x$-coordinates are distinct.
Since the predicate $p \preceq q$ for intersections $p, q \in
P_\times$ can still be evaluated in constant time, the asymptotic
runtime of the algorithm remains unchanged.

\paragraph{Summary}

In conclusion, the two techniques sketched in this section generalize
the algorithm not only to inputs $\dual{P}$ with parallel lines, but
also to inputs with identical lines.  The algorithm is thus applicable
to arbitrary inputs.  Since we can achieve the desired (symbolic)
perturbation via pre-processing in $\Oh(n \log n)$ time for an input
of $n$ lines, our main theorem follows:

\begin{theorem}[Main result]
  There exists a RAM program that obliviously
  realizes the median intersection selection in expected
  $\Oh(n \log^2 n)$ time for $n$ non-vertical lines
  inducing at least one intersection.
\end{theorem}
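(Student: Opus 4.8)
The plan is to assemble the theorem by composing the results already established: \cref{lma:CorrectnessAndRuntime} and \cref{lma:Obliviousness} handle the general-position case, and the two paragraphs of \cref{sec:NonGeneralPositions} (parallel lines and coincident $x$-coordinates) lift these to arbitrary inputs. First I would restate the pipeline: given $n$ non-vertical lines $\dual{P}$ inducing at least one intersection, the program (1) preprocesses $\dual{P}$ into the perturbed family $\dual{P}'$ described under ``Parallel lines'', (2) instantiates \cref{alg:MatIntSelection} with the oblivious building blocks and the generalized order $\preceq$ of \cref{def:IntersectionOrder}, and (3) post-processes to account for an even number of intersections by computing both rank-$k_1$ and rank-$k_2$ elements and averaging. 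Each stage must be shown to be correct, to run in expected $\Oh(n\log^2 n)$ time, and to leak only $\abs{\dual{P}}$; the theorem then follows by the composability of oblivious programs (\bareRef{Appendix A}{sec:ObliviousComposability}).

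For correctness, the key point is that the perturbation is \emph{order-preserving} on the ``real'' intersections and inserts the ``virtual'' ones in balanced halves $V_-, V_+$ to the far left and far right, so the rank of the sought median in $\dual{P}'$ equals its rank in $\dual{P}$ shifted by the fixed, data-independent quantity $\abs{V_-}$. Combined with \cref{lma:IntCountCorrectness} (whose hypotheses are met by \cref{def:IntersectionOrder}, via \cref{lma:ThreePointLemma}), this reduces the arbitrary-input problem to the selection problem solved by \cref{lma:CorrectnessAndRuntime}. The even-cardinality post-processing is justified by \textcite{sen_estimates_1968} and costs only a constant factor since $k_1$ and $k_2$ differ by at most one. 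The arithmetic-precision remark ($2(b+1)$ bits suffice) guarantees the intersection-$x$ computations are exact in the word-RAM model, so perfect correctness is retained.

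For runtime, preprocessing is $\Oh(n\log n)$ (sorting line offsets to assign $\#_\ell$, plus linear scans to assign $s_\ell$), dominated by the expected $\Oh(n\log^2 n)$ cost of $\fun{IntSelection}$ from \cref{lma:CorrectnessAndRuntime}; the post-processing at most doubles the work. For obliviousness, the preprocessing step's control flow depends only on $n$ (not on how many lines were parallel, which is exactly the point of distributing $V$ evenly), and \cref{lma:Obliviousness} already shows the core selection leaks only $\abs{\dual{P}}$ — note its hypothesis that $\binom{n}{2}$ is odd is now discharged by the post-processing trick, and its hypothesis of general position is discharged by the perturbation, which by construction yields distinct slopes and a total order $\preceq$ on intersections. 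Hence every stage leaks at most $n$, and composability delivers the claimed leakage $\fun{leak}(\dual{P}) = \abs{\dual{P}}$.

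The main obstacle I anticipate is the detail deferred in the excerpt: computing $s_\ell$ (the $\pm 1$ split into $V_-$ and $V_+$) and the count $\abs{V_-}$ \emph{without} leaking the number of virtual intersections, since that number is precisely the input-dependent quantity we must hide. The resolution is that $\fun{IntCount}$, $\fun{IntSample}$, and $\fun{IntEnumeration}$ are all oblivious with respect to leakage that does not include the virtual/real distinction — the perturbed order $\preceq$ treats virtual and real intersections uniformly — so as long as the \emph{assignment} of $s_\ell$ proceeds by a data-independent scan (e.g.\ alternating $\pm1$ by the rank $\#_\ell$) the resulting partition is automatically balanced up to one element and no branch in the algorithm ever tests ``was this pair parallel?''. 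Verifying that this scan interacts correctly with the $\varepsilon, \varepsilon^2$ symbolic perturbation so that all pairwise intersections become distinct and land on the intended side is the one place requiring genuine care; everything else is bookkeeping over previously established lemmas.
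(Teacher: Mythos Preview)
Your proposal is correct and follows precisely the paper's own (very brief) argument: the theorem is stated as an immediate consequence of combining \cref{lma:CorrectnessAndRuntime} and \cref{lma:Obliviousness} with the perturbation and total-order constructions of \cref{sec:NonGeneralPositions}, together with the $\Oh(n\log n)$ preprocessing bound and composability. One small slip worth fixing: $\abs{V_-}$ is \emph{not} data-independent (it is half the number of parallel pairs, which depends on the input) --- what \emph{is} data-independent after perturbation is the target rank $k=\tfrac{1}{2}\bigl(\binom{n}{2}-1\bigr)$, which is exactly why the algorithm never needs to compute or leak $\abs{V_-}$ at all.
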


\section{Implementation and evaluation}%
\label{sec:Evaluation}

We developed a prototype of our oblivious algorithm in \Cpp.\footnote{%
  \url{http://go.wwu.de/ms6fz}
}
The goal of the implementation is to show that the algorithm is easily implementable and to provide an estimate of the algorithm's performance.
For this we also implemented the baseline algorithm~\parencite{matousek_randomized_1991}.

\paragraph{Limitations}
The primary limitation is that our prototype only accesses arrays of non-constant size in an oblivious manner.
Code fragments such as inner loops and methods accessing only a
constant number of memory cells do not necessarily probe memory obliviously.
Even though it is conceptually trivial to transform those code fragments to achieve \enquote{full} obliviousness,
we note that --- without publicly available libraries providing
low-level primitives for implementations of oblivious algorithms ---
the obliviousness eventually might depend on the compiler and platform used.

We believe that our implementation still provides a good estimate of the performance of a \enquote{fully} oblivious implementation:
The loops in our runtime-intensive primitives are all linear scans over arrays.
As such \enquote{fully} oblivious loop bodies will not introduce a large overhead since they will likely not introduce cache misses.
Also our oblivious primitives can also be implemented largely without
data-dependent branches, thus potentially eliminating branch mispredictions.

The second main limitation is that we do not implement the handling of parallel lines (as described in \cref{sec:ParallelLines}).
This would require an additional pre-processing step as well as extending both the slope and the offset with a symbolic perturbation.
As mentioned above this would result in a low constant factor overhead in both runtime and memory space usage.
Since this applies to both the oblivious and non-oblivious algorithm
this has no direct implication for the performance evaluation below,
although there might be a more efficient way to handle identical
slopes in the non-oblivious case.

Finally our implementation resorts to a suboptimal, but
easy-to-implement oblivious sorting primitive with $\Oh(n \log^2 n)$
and thus has an expected $\Oh(n \log^3 n)$ runtime in the oblivious
setting. This leads to an additional $\Oh(\log n)$ overhead in runtime as
compared to our non-oblivious implementation and thus underestimates
the performance of the proposed algorithm.

\begin{figure}[t]
  \resizebox{\linewidth}{!}{\includegraphics{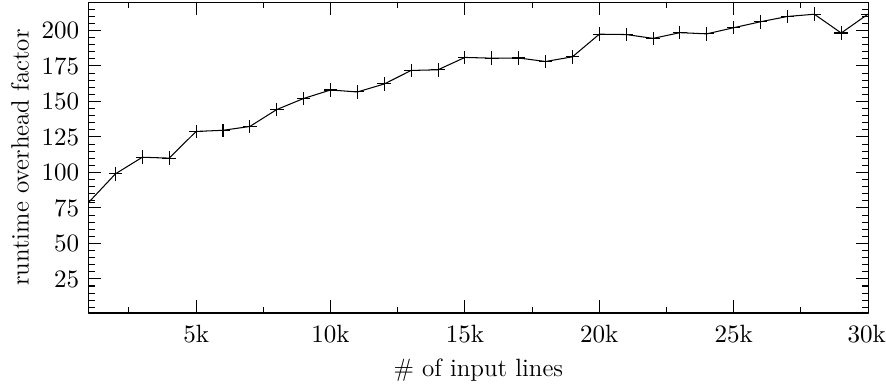}}
  \caption{Runtime overhead factor (averaged over 10 random inputs) of the oblivious algorithm compared to the baseline algorithm with non-oblivious primitives.}%
  \label{fig:RuntimeOverheadEvaluation}
\end{figure}

\paragraph{Performance}%
\label{sec:PerformanceEvaluation}

We used \code{libbenchmark}\footnote{%
  \url{https://github.com/google/benchmark} } to measure the runtime
for inputs ranging from 1,000 to 30,000 lines.  The input consists of
shuffled sets of lines with non-uniformly increasing slope and a
random offset, both represented by 64-bit integers.  For all our
experiments and independent of $n$, we fixed an interval
$[m_\text{min}, m_\text{max}]$ and an interval
$[b_\text{min}, b_\text{max}]$.  To generate a set of $n$ random
lines, we then set
$r \coloneqq \hfrac{(m_\text{max} - m_\text{min})}{n}$ and constructed
each line $\ell_i = \tuple{m_i, b_i}$ in turn by independently
sampling a random slope $m_i$ from
$m_\text{min} + i \cdot r \leq m_i < m_\text{min} + (i + 1) \cdot r$
(thus ensuring both spread and distinctness of slopes) and a random offset $b_i$ from
$b_\text{min} \leq b_i \leq b_\text{max}$.  We then permuted the
resulting set of lines using \code{std::ranges::shuffle}.

The performance evaluation results are shown in \cref{fig:RuntimeOverheadEvaluation}.
For inputs of 10,000--30,000 random lines our algorithm is about
150--210 times slower than the baseline algorithm.
While this is a significant slowdown, we remind the reader
of both the logarithmic overhead incurred by choosing a suboptimal
sorting algorithm and the fact that the baseline algorithm does not offer
any obliviousness.
The runtime was less than 10 seconds for all evaluated input sizes.

All experiments were performed on a Dell XPS~7390 with an Intel i7--10510U CPU and 16~GiB RAM running
Ubuntu~20.04.

\begin{figure}[t]
  \resizebox{\linewidth}{!}{\includegraphics{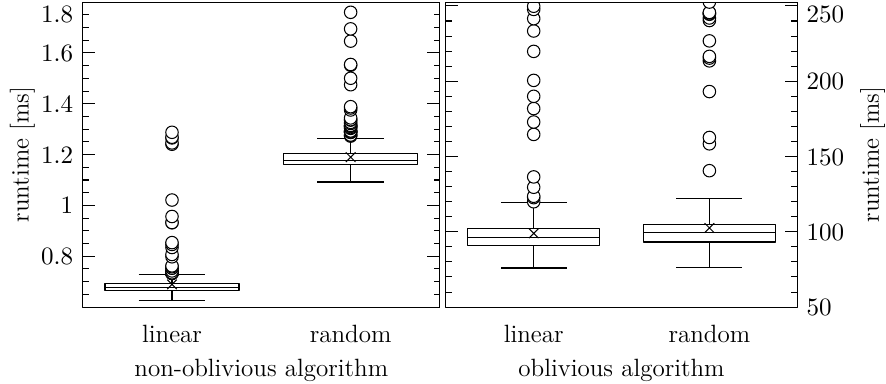}}
  \caption{Runtime distribution over $500$ runs of the oblivious and non-oblivious algorithms for $n = 1000$ lines.
    Left (in each subfigure): Data for a fixed, sorted input of lines intersecting in a single point.
    Right (in each subfigure): Data for a shuffled input of lines with non-uniformly increasing slopes and a random offset.}%
  \label{fig:RuntimeDifferenceEvaluation}
\end{figure}

\paragraph{Obliviousness}%
\label{sec:ObliviousnessEvaluation}

We assessed the obliviousness of our implementation of the building blocks by tracing memory accesses as part of unit testing.
For this, we abstracted the memory sections as arrays of fixed but dynamic size.
We assigned a fingerprint to each sequence of reads and writes by hashing both the memory operation and the access location.
Since all building blocks used by the main algorithm are deterministic, we asserted their obliviousness by comparing fingerprints for different inputs with identical leakage.

Additionally, we evaluated the runtime of both our oblivious algorithm and the baseline algorithm when applied to two inputs of different characteristics.
For this we compared the random lines described above with a sorted set of lines $\ell_i = \tuple{i, -i}$, intersecting in the single point $p = \tuple{1, 0}$.
The baseline algorithm showed significantly different runtimes for different inputs
(\cref{fig:RuntimeDifferenceEvaluation}), making it
abundantly clear that even without statistical analyses an
adversary can distinguish these different kinds of input from the runtime
alone. In contrast, there was only slight variation in the runtime of
our proposed algorithm which we attribute to the presence of code
processing constant-sized subproblems in a (currently) non-oblivious
manner.

\section{Conclusion}%
\label{sec:Conclusion}

We presented a modification of \citeauthor{matousek_randomized_1991}'s randomized algorithm~\parencite{matousek_randomized_1991} for obliviously determining the median slope for a given set of $n$ points.
We also showed how to generalize the algorithm to arbitrary inputs --- allowing both collinear points and multiple points with identical $x$-coordinate --- while maintaining obliviousness.
Our modified algorithm has an expected $\Oh(n \log^2 n)$ runtime, matching the general oblivious transformation bound of the original algorithm.
We provide a proof-of-concept of the oblivious algorithm in \Cpp{}, showing that the algorithm indeed can be implemented and has a runtime that make its application viable in practice.

  \setcounter{secnumdepth}{0}

  \section{References}
  {\small
  \printbibliography[heading=none]}

  \clearpage

\ifthenelse{\boolean{fullVersion}}{}{ }

\section{Appendices}

\subsection{A --- Composability of oblivious programs}%
\label{sec:ObliviousComposability}

Here we prove the composability of oblivious programs, i.\,e., that an oblivious program additionally invoking other oblivious programs as sub-procedures remains oblivious.
This is an adoption of the argument by \textcite{asharov_optorama_2018} to our definition of obliviousness.

Let $\subProg \coloneqq \tuple{\prog_0, \ldots, \prog_k}$ be probabilistic RAM programs with $k \geq 1$ such that $\prog_1, \ldots, \prog_k$ are oblivious.
We will first analyze the obliviousness of $\prog_0$ in the $\subProg$-hybrid RAM model:
Program $\prog_0$ may invoke any program from $\subProg$ as sub-procedure.
For the invocation of $\prog_i$ we assume that $\prog_0$ copies the input $x_i$ for $\prog_i$ to a new location $p$ in memory and executes a special machine instruction $\fun{invoke}(\prog_i, p)$ (only available in the $\subProg$-hybrid model).
The instruction immediately changes the partial memory beginning at location $p$ as if program $\prog_i$ were executed on the memory with offset $p$.
Since $\prog_i$ is executed as part of the instruction, any memory probes performed by $\prog_i$ are not part of the probe sequence of $\prog_0$ in the $\subProg$-hybrid model.
To ensure that the execution of $\prog_i$ does not interfere with the memory of $\prog_0$ a location $p$ after all used memory locations must be selected.
After the invocation $\prog_0$ can read the result computed by $\prog_i$ from memory.

In the $\subProg$-hybrid model we augment the probe sequence for $\prog_0$ with the sub-procedure invocations.
Similarly to the access locations for memory probes we identify each invocation by the invoked program and the leakage for the respective input:

\begin{definition}[Augmented probe sequence]
  Let $\access$ be as defined in \cref{sec:ObliviousRamModel}.
  Let $\subProg$ be as defined above and for each $\prog_i$ with $1 \leq i \leq k$ let $\fun{leak}_i\funDef{X_i}{\bool^*}$ be the leakage.
  We define the set of probes visible to the adversary during the execution of $\prog_0$ in the $\subProg$-hybrid model as
  $\accessSub \coloneqq \access \cup \set{\fun{invoke}} \times \subProg \times \bool^*$.

  The random variable
  $\ASub{\prog_0(x)}\funDef{\Omega}{\accessSub^*}$
  is defined to denote the hybrid sequence of probes and sub-procedure invocations by $\prog_0$ for input $x$.
  Specifically, for each memory probe $\fun{probe} \in \set{\fun{read}, \fun{write}}$ at location $i \in [N]$ the sequence contains an entry $\tuple{\fun{probe}, i}$.
  The sequence does not include memory probes performed by sub-procedures.
  For each invocation of sub-procedure $\prog_i \in \subProg$ with input $x_i$ the sequence contains an entry $\tuple{\fun{invoke}, \prog_i, \fun{leak}_i(x_i)}$.
\end{definition}

Note that according to the definition above the offset of the partial memory $p$ is not visible to the adversary.
This is a simplification which is justified by the fact that $\prog_0$ can always choose $p$ to be directly after the largest memory location written to before.
This guarantees that no memory contents are overwritten and implies that the adversary, given any probe sequence $A \in \accessSub^*$, can reconstruct $p$ for all sub-procedure invocations.

Invocations in the plain model can be realized by executing $\prog_i$ directly instead of the instruction $\fun{invoke}(\prog_i, p)$.
The offset of the partial memory $p$ can be held in a single special register and applied to every memory probe by a simple modification of $\prog_i$.
The register contents can be temporarily stored in memory and recovered after the execution of the sub-procedure.
In the plain model memory probes performed by $\prog_i$ are contained in the probe sequence.
The goal now is to show that obliviousness of $\prog_0$ in the $\subProg$-hybrid model implies obliviousness of the composed program $\prog_0$ in the plain RAM model:

\begin{lemma}[Composability of oblivious programs]
  Let $f_0, \ldots, f_k$ with $f_i\funDef{X_i}{Y_i}$ be computable functions, $\prog_0, \ldots, \prog_k$ randomized RAM programs and $\fun{leak}_0, \ldots, \fun{leak}_k$ with $\fun{leak}_i\funDef{X_i}{\bool^*}$ leakages.
  $\prog_0$ obliviously simulates $f_0$ with regard to leakage $\fun{leak}_0$ in the plain model if
  \begin{compactitem}
    \item[(a)] each $\prog_i$ for $1 \leq i \leq k$ obliviously simulates $f_i$ with respect to leakage $\fun{leak}_i$ in the plain model,
    \item[(b)] $\prog_0$ is correct in the $\subProg$-hybrid model, i.\,e., for all inputs $x \in X_0$ the equality
      $\Pr{\prog_0{(x)}^\subProg = f_0(x)} = 1$
      holds,
    \item[(c)] and $\prog_0$ is secure in the $\subProg$-hybrid model, i.\,e., for all inputs $x, x' \in X_0$ with $\fun{leak}_0(x) = \fun{leak}_0(x')$ the equality
      $\sum_{A \in \accessSub^*} \abs*{\Pr{\ASub{\prog_0(x)} = A} - \Pr{\ASub{\prog_0(x')} = A}} = 0$
      holds.
  \end{compactitem}
\end{lemma}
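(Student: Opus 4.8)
The plan is to prove the two halves of oblivious simulation — perfect correctness and perfect security — separately, in each case transporting the property from the $\subProg$-hybrid model to the plain model. Correctness is immediate: replacing every instruction $\fun{invoke}(\prog_i, p)$ by a direct execution of $\prog_i$ on the partial memory at offset $p$ does not change the functional behaviour of $\prog_0$, because by assumption~(a) each $\prog_i$ is perfectly correct and therefore leaves $f_i(x_i)$ at offset $p$ with probability $1$ — exactly the effect postulated for $\fun{invoke}$. Hence for every fixing of the random tape the composed plain program produces the same output as the hybrid program, and assumption~(b) yields $\Pr{\prog_0(x) = f_0(x)} = 1$ for all $x \in X_0$; the extra probes needed to save and restore the offset register are deterministic and irrelevant here.

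For security I would introduce an explicit randomized \emph{expansion} map $T$ from hybrid probe sequences to plain probe sequences. Given $A \in \accessSub^*$, the map $T$ copies every memory probe of $A$ verbatim and replaces every invocation entry $\tuple{\fun{invoke}, \prog_i, \lambda}$ by a fresh, mutually independent sample from the probe distribution $\A{\prog_i(z)}$ for an arbitrary $z$ with $\fun{leak}_i(z) = \lambda$, shifted by the offset $p$ that, as noted after the definition of the augmented probe sequence, is uniquely reconstructible from the prefix of $A$ up to that entry. The map is well defined precisely because each $\prog_i$ is oblivious by assumption~(a): all inputs in one leakage class induce the same probe distribution, so $T$ never needs the true sub-procedure input.

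The heart of the argument is the claim that $\A{\prog_0(x)} \overset{d}{=} T(\ASub{\prog_0(x)})$ for every $x \in X_0$, with the internal randomness of $T$ independent of $\ASub{\prog_0(x)}$. I would model the plain-model random tape as the concatenation of $\prog_0$'s tape with one fresh tape per invocation, and prove the claim by conditioning on $\prog_0$'s tape and inducting over the invocations: conditioned on the computation of $\prog_0$ up to the $j$-th invocation — which determines the hybrid prefix, the actual input $x_i$ handed to that call, its recorded leakage $\lambda = \fun{leak}_i(x_i)$, and the offset $p$ — the probes contributed by that call in the plain model are distributed as $\A{\prog_i(x_i)}$ using $\prog_i$'s fresh tape, hence (by obliviousness of $\prog_i$) as the law $T$ uses for $\lambda$, and with the same offset $T$ reconstructs. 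Splicing these independent blocks into the common sequence of $\prog_0$'s own probes reproduces exactly the distribution of $T(\ASub{\prog_0(x)})$. Granting the claim, the conclusion is one line: if $\fun{leak}_0(x) = \fun{leak}_0(x')$ then assumption~(c) gives $\ASub{\prog_0(x)} \overset{d}{=} \ASub{\prog_0(x')}$, and applying the same $T$ to both sides preserves equality in distribution, so $\A{\prog_0(x)} \overset{d}{=} \A{\prog_0(x')}$, which is precisely the plain-model security condition for $\prog_0$.

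I expect the main obstacle to be making this conditioning/independence step fully rigorous: the number of invocations and their inputs are themselves random, so one must argue carefully that the input to the $j$-th call is a deterministic function of the data visible so far and that each sub-procedure's tape is independent of everything preceding its call, and one must confirm that $T$'s offset reconstruction matches the true offsets along every execution. Once this scaffolding is in place, the two substantive inputs — obliviousness of the $\prog_i$ (to collapse each invocation block to its leakage class) and the offset-reconstruction remark — enter only in routine ways.
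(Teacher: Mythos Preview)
Your approach is essentially the paper's: correctness transfers trivially, and for security you factor the plain-model probe sequence into $\prog_0$'s own probes interleaved with independent sub-procedure blocks whose distributions depend only on the recorded leakage. The paper phrases this by fixing a plain sequence $A$, summing over all ``separations'' $A = A_0 \concat I_1 \concat A_1 \concat \cdots \concat I_n \concat A_n$, and writing the probability of $A$ as a product of the hybrid-model probability of the corresponding $A'$ with the sub-procedure probabilities $\Pr{\A{\prog_{j_i}(\cdot)} = I_i}$; your randomized expansion map $T$ packages exactly the same factorisation, so the two arguments are notational variants.

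There is one genuine omission. The $\subProg$-hybrid model allows $\prog_0$ to invoke \emph{itself} (recall $\subProg = \tuple{\prog_0, \ldots, \prog_k}$), and your map $T$ is not well defined on entries $\tuple{\fun{invoke}, \prog_0, \lambda}$: you would need to sample from $\A{\prog_0(z)}$ knowing only $\fun{leak}_0(z) = \lambda$, but that this suffices is precisely the statement being proved, and assumption~(a) covers only $i \geq 1$. The paper treats this case explicitly: it first runs your argument for the non-recursive case, then inducts on the depth $d$ of $\prog_0$'s self-recursion, using the induction hypothesis to supply obliviousness of the inner calls. Your construction handles the base case verbatim and the inductive step is routine, so the fix is short --- but as written the proposal does not cover recursive $\prog_0$.
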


\begin{proof}
  To prove this we need to show that $\prog_0$ is both correct and secure in the plain model.

  The correctness immediately follows from the correctness of $\prog_0$ in the $\subProg$-hybrid model:
  Since the invocation in the plain and $\subProg$-models are functionally equivalent, termination with correct result in the $\subProg$-hybrid model implies termination with the correct result in the plain model.
  It thus only remains to show that $\prog_0$ is secure in the plain model, i.\,e., that any finite probe sequence $A$ is equally likely for any two inputs with identical leakage.

  We first consider the simple case where $\prog_0$ does not invoke itself.
  For this we fix any two inputs $x, x' \in X_0$ with $\fun{leak}_0(x) = \fun{leak}_0(x')$ and any finite probe sequence $A \in \accessSub^*$ for $\prog_0$ in the plain model.
  Consider any separation
  \begin{equation*}
    A = \tuple{A_0 \concat I_1 \concat A_1 \concat \ldots \concat I_n \concat A_n}
  \end{equation*}
  of $A$ where each $A_i$ consists of any number of memory probes performed by $\prog_0$ and each $I_i$ consists of the memory probes performed during the $i$-th invocation.
  Each separation of $A$ corresponds to any probe sequence
  \begin{equation*}
    A' = \tuple{A_0 \concat \tuple{\fun{invoke}, \prog_{j_1}, l_i} \concat \ldots \concat A_n} \in \accessSub^*
  \end{equation*}
  for $\prog_0$ in the $\subProg$-hybrid model.
  In $A'$ the memory probes $A_i$ by $\prog_0$ remain the same and sub-procedure $\prog_{j_i} \in \subProg \setminus \set{\prog_0}$ with some leakage $l_i$ performs memory probes $I_i$.

  We need to show the security
  \begin{equation*}
    \Pr{\A{\prog_0(x)} = A} = \Pr{\A{\prog_0(x')} = A}
  \end{equation*}
  in the plain model.
  Since the distribution of memory probes for the sub-procedures is independent of the memory probes $A_i$ by construction\footnote{
    This disregards the offset $p$ of the probe sequences $I_i$.
    The argument still holds since we can consider the distribution of the sequences $I_i$ shifted by $-p$ instead.
  }, it follows that the probability for $A$ with a specific $A'$ under input $y$ is exactly
  \begin{equation*}
    \Pr{\ASub{\prog_0(y)} = A'} \cdot \prod_{1 \leq i \leq n} \Pr{\A{\prog_{j_i}(y_i)} = I_i}
  \end{equation*}
  for some inputs $y_i$ to the sub-procedures with respective leakages $l_i$.
  From the security of $\prog_0$ in the $\subProg$-hybrid model and the security of each $\prog_{j_i}$ in the plain model it follows that this is the same for both $y \coloneqq x$ and $y \coloneqq x'$.
  Thus the probability for $A$ --- summing over all possible $A'$ --- is also the same and $\prog_0$ is secure in the plain model.

  For the second case we show that the lemma also holds when $\prog_0$ invokes itself recursively.
  We do so by induction over the depth $d$ of the recursion.
  The base case $d = 0$ corresponds to the case above when $\prog_0$ does not invoke itself.
  For $d > 0$ we consider finite probe sequences $A$ when each invocation may be either an invocation of sub-procedure $\prog_i \in \subProg \setminus \set{\prog_0}$ or an invocation of $\prog_0$ with a recursion depth of at most $d - 1$.
  By induction hypothesis for all invocations of $\prog_0$ with recursion depth of at most $d - 1$ the distribution of probe sequences is determined by the leakage of the input.
  Thus the same is true for any recursion depth $d \in \nat$.
  Since for any finite probe sequence $A$ the recursion depth of $\prog_0$ is also finite this proves the lemma.
\end{proof}

\subsection{B --- Lower bound}%
\label{sec:MedianSlopeSelectionLowerBound}

The lower bound of \textcite{cole_optimal-time_1989} for the general slope selection problem applies to problem definitions
\begin{compactitem}
  \item[(a)] allowing the selection of the smallest slope ($k = 0$) and
  \item[(b)] regarding slopes through points with equal $x$-coordinates as having a non-finite negative slope.
\end{compactitem}
For an arbitrary input $X = \tuple{x_1, \ldots, x_n} \in \real^n$ selecting the smallest slope through points $P = \set{\tuple{x_1, 1}, \ldots, \tuple{x_n, n}}$ yields a non-finite negative slope if and only if not all values in $X$ are distinct.
This proves, through reduction from the element uniqueness problem, a lower bound of $\Oh(n \log n)$ in the algebraic decision tree model.
\parencite{cole_optimal-time_1989}

This argument can be modified to prove a lower bound for the median slope selection problem also excluding non-finite slopes.

\begin{lemma}
  Let $P \subset \real^2$ be a set of $n$ points.
  Then, in general, determining the median slope through points in $P$ as defined in \cref{sec:ProblemDefinition} requires $\Om(n \log n)$ steps in the algebraic decision tree model.
\end{lemma}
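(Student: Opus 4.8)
The plan is to reduce the element uniqueness problem --- deciding, for $X = \tuple{x_1, \ldots, x_n} \in \real^n$, whether all $x_i$ are distinct --- to median slope selection, following the blueprint of \textcite{cole_optimal-time_1989} but repairing it so that it no longer relies on selecting a vertical line. Element uniqueness is known to require $\Om(n \log n)$ steps in the algebraic decision tree model (its ``all distinct'' set is partitioned into $n!$ connected components, one per ordering), and this bound transfers to any problem to which it reduces by an algebraic transformation using only $O(n)$ arithmetic operations and comparisons, followed by $O(1)$ postprocessing of the output.

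First I would, as in \cite{cole_optimal-time_1989}, associate with $X$ the ``test points'' $T(X) \coloneqq \set{\tuple{x_i, i} : i \in [n]}$: a pair $\set{\tuple{x_i, i}, \tuple{x_j, j}}$ induces a vertical line precisely when $x_i = x_j$, so $X$ is all-distinct if and only if $\abs{L} = \binom{n}{2}$, and otherwise $\abs{L} < \binom{n}{2}$. The difficulty is that in the problem as defined in \cref{sec:ProblemDefinition} vertical lines are simply excluded rather than reported, so a coincidence is no longer visible in the output line and must instead be made visible in the \emph{value} of the median slope. The next step is therefore to pad $T(X)$ with a fixed family $D$ of $\Th(n)$ ``ballast'' points, depending only on $n$, so that $P(X) \coloneqq T(X) \cup D$ and the median rank $k = \tfrac{\abs{L}-1}{2}$ of $P(X)$ fall exactly onto one designated line of $T(X)$ whose slope equals a fixed constant $c$ when $X$ is all-distinct. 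I would split $D$ into one block whose mutual slopes and slopes to $T(X)$ lie far to one side and a second block lying far to the other side, with the two block sizes chosen so that (i) if $X$ is all-distinct then the median slope equals $c$, and (ii) if $x_i = x_j$ for some $i \neq j$, then deleting that excluded vertical pair decreases $\abs{L}$, shifts the median rank, and moves the selected line off the designated one, so the median slope differs from $c$. An algebraic decision tree for median slope selection, followed by a single comparison of the output slope against $c$, then decides element uniqueness; since the reduction uses $O(n)$ algebraic operations, the claimed $\Om(n \log n)$ lower bound follows.

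The hard part will be the construction and analysis of the ballast for (ii): the finite slopes among the test points are unbounded (a near-coincidence produces an arbitrarily steep line), so $D$ cannot simply dominate them, and I would instead have to control all $\Th(n^2)$ slopes --- including the $\Th(n^2)$ cross slopes between $D$ and $T(X)$ --- tightly enough that a single excised vertical pair provably perturbs the median, and do so uniformly over every input rather than merely the generic one; restricting the element uniqueness instances to a bounded domain and using the dual embedding $\tuple{i, x_i}$ (so that the test slopes are bounded) may be needed to make this tractable. If a sufficiently robust explicit padding turns out to be awkward, the same conclusion can be obtained by a direct Ben-Or-style argument instead: it suffices to exhibit one constant $c$ for which $\setCond{X \in \real^n}{\text{the median slope of } P(X) = c}$, or the sublevel set $\setCond{X}{\text{the median slope of } P(X) \le c}$, has $2^{\Om(n \log n)}$ connected components --- which is exactly what the padding above is designed to guarantee.
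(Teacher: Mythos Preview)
Your plan is sound in outline but leaves the central construction unspecified, and you flag the obstruction yourself: with $T(X)=\{(x_i,i)\}$ the test slopes are unbounded, so no fixed ballast $D$ depending only on $n$ can flank them; switching to $(i,x_i)$ over a bounded domain is the right repair, but you would still have to exhibit an explicit $D$, control all $\Theta(n^2)$ cross slopes, and verify that a single excised vertical pair --- which changes $|L|$ and the median rank simultaneously --- provably moves the output off the designated constant $c$. None of this is carried out, and the Ben-Or fallback is just a restatement of the same obligation (you still need a point set whose median-slope level set has many components).

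The paper takes a different and much shorter route that eliminates ballast entirely by building an exact sign symmetry into the slope multiset. After shifting so that all $x'_i>0$, each $x'_i$ for $i<n$ is sent to the mirror pair $(i,x'_i)$ and $(i,-x'_i)$, while $x'_n$ is sent to $(n+1,x'_n)$ and $(n,-x'_n)$. Reflection across the $x$-axis pairs every positive slope among the upper-half points with a negative slope among the lower-half points and vice versa; cross slopes between the two halves split evenly; the mirrored pairs at the same abscissa are vertical and drop out; and the one deliberately skewed pair for $x'_n$ contributes exactly one extra positive slope. Consequently, for \emph{every} input $X$ there is exactly one more positive than negative slope, and slope-zero lines appear --- always in even number --- iff some $x_i=x_j$. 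Hence the median slope is $0$ iff $X$ has a repeat, and one comparison finishes the reduction. The symmetry trick buys you exactly what your padding was meant to: it fixes the median at a known constant in the distinct case without having to dominate or count anything that depends on the unknown magnitudes of the $x_i$.
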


\begin{proof}
  As done by \textcite{cole_optimal-time_1989} we reduce from the element uniqueness problem.
  Given an arbitrary input $X = \tuple{x_1, \ldots, x_n} \in \real^n$, we first transform $X$ to $X' = \tuple{x'_1, \ldots, x'_n}$ containing only positive values by subtracting less than the minimal value:
  \begin{equation*}
    x'_i \coloneqq x_i - \min_{1 \leq i \leq n} x_i + 1
  \end{equation*}
  For $1 \leq i < n$ we then map each value $x'_i$ to two points with equal $x$-coordinates:
  \begin{alignat*}{3}
    \tuple{i, x'_i} & \quad\text{and}\quad & \tuple{i, -x'_i}
  \end{alignat*}
  The last value $x'_n$ is mapped to two points with distinct $x$-coordinates:
  \begin{alignat*}{3}
    \tuple{n + 1, x'_n} & \quad\text{and}\quad & \tuple{n, -x'_n}
  \end{alignat*}
  Let $P \subset \real^2$ be the set of these $2 n$ points.

  Considering only lines $L$ through points in $P$ with positive $y$-coordinates, there exist some number $a \geq 0$ of lines with positive slope and some number $b \geq 0$ of lines with negative slope.
  Yet $L$ also contains a line with a slope of zero if and only if not all values in $X'$ (and thus also in $X$) are distinct.
  Looking at the lines $L'$ through points in $P$ with negative $y$-coordinates the same holds true, except because of the inverted $y$-coordinates there are exactly $a$ lines with negative and $b$ lines with positive slope in $L'$.
  $L'$ also contains the same number of lines with slope zero as $L$.

  It remains to consider the lines through points with different sign of the $y$-coordinate.
  Because all $x'_i$ are strictly positive and only looking at lines through points not mapped from the same $x'_i$, there are exactly
  \begin{equation*}
    \sum_{i = 0}^{n - 1} i = \frac{(n - 1) n}{2}
  \end{equation*}
  lines with positive and negative slope, respectively, and no lines with slope zero.
  Since for each $x'_i$ with $1 \leq i < n$ the two points are mapped to the same $x$-coordinate, lines through these points are not considered according to the problem definition.
  The exception is the line through the points $x'_n$ is mapped to, which has a positive slope.

  In summary, there are
  \begin{equation*}
    c \coloneqq a + b + \frac{(n - 1) n}{2}
  \end{equation*}
  lines with negative slope and $c + 1$ lines with positive slope as well as an even number of lines with slope zero.
  Thus, the median slope is zero if and only if a line with slope zero exists.
  This is the case if and only if not all values from $X$ are distinct.
\end{proof}

\end{document}